\title{An Efficient Sampling Algorithm for Difficult Tree Pairs} 
\author{Sean Cleary and Roland Maio}
\begin{document}

\maketitle

\begin{abstract}
It is an open question whether there exists a polynomial-time algorithm for computing the rotation distances between pairs of extended ordered binary trees.
The problem of computing the rotation distance between an arbitrary pair of trees, $(S, T)$, can be efficiently reduced to the problem of computing the rotation distance between a difficult pair of trees $(S', T')$, where there is no known first step which is guarranteed to be the beginning of a minimal length path. Of interest, therefore, is how to sample such difficult pairs of trees of a fixed size. We show that it is possible to do so efficiently, and present such an algorithm that runs in time $O(n^4)$.
 \end{abstract}

\newtheorem{definition}{Definition}
\newtheorem{theorem}{Theorem}
\newtheorem{corollary}{Corollary}
\newtheorem{lemma}{Lemma}
\theoremstyle{plain}
\newtheorem{proposition}{Proposition}
\newcommand{\ds}{\displaystyle}

\section{Introduction}
Trees are a fundamental data structure with wide applications ranging from efficient search (such as binary search trees) to modelling biological processes (such as phylogenetic trees). We routinely are interested in calculating some metric of interest between trees.

 One widely-considered tree distance metric on trees with a natural left-to-right order on leaves is that of the rotation distance between a pair of extended ordered binary trees. There are no known polynomial-time algorithms for computing rotation distance.  Culik and Wood \cite{culik1982note} described rotation distance. Sleator, Tarjan and Thurston \cite{slt88} used the correspondence between trees with $n$ internal nodes and triangulations of the marked regular $n+2$-gon to show that if there is a common edge between the two triangulations then any shortest path does not flip this edge.  Such a common edge thus breaks the rotation distance problem into two smaller sub-problems.  Furthermore, they  showed that if it is possible to flip an edge of either polygon to obtain a common edge, then there is a shortest path which begins by doing so.  We call edges which are not common but which can be flipped to become a common edge {\em one-off} edges, as they are one move away from being common edges themselves.  Cleary and St.~John  \cite{rotfpt} used these reduction rules  to show that rotation distance is fixed parameter tractable.

 We call a pair of trees with no common edges and no edges which can be immediately flipped to create a common edge a {\em difficult tree pair}.   The above reductions transform the problem of computing the rotation distance on a pair of trees drawn from all possible pairs to a pair of trees drawn from the set of all such difficult tree pairs. A common edge, arising either immediately or from a performing a single flip to change a one-off edge to a common edge, then naturally splits the tree pair into a pair of smaller tree pairs, as explained in Sleator, Tarjan, and Thurston \cite{slt88}.  The kernel of the difficult of the rotation distance problem at this point is to find distances between difficult pairs.  
 
To understand how effective different approximation and partial algorithms are at evaluating and estimating rotation distance, it would be useful to sample difficult tree pairs. 
 It is possible to find examples of difficult tree pairs by picking a tree pair of large size at random, and then performing all possible reductions and one-off moves, splitting the problem into a collection of smaller subproblems, until either the trees are identical (extremely unlikely)  or until a collection of difficult tree pairs is obtained.  But such a procedure is not only time-consuming, it is not possible to tell in advance how many reductions there will be and what the resulting sizes of the smaller remaining difficult piece pairs will be.  Thus there is no control on the resulting size of the difficult tree pairs produced.   In general (see Cleary, Rechnitzer and Wong \cite{randomf}) there are a sizable number of common edges and one-off edges, resulting on average about at least a 10\% reduction in the size of a randomly selected tree pair to a largest difficult remaining tree pair. 
 It is not difficult to construct specific examples of specified size of difficult tree pairs- examples of Dehornoy \cite{dehornoy}, Pournin \cite{pournin}, and Cleary and Maio \cite{badconflicts} are families of difficult pairs but in each case of a restricted type.  In many of these very specific cases, analysis to that family of instances can give coincident upper and lower bounds on rotation distance, giving an exact calculation.  But these families are very sparse in the set of all difficult tree pairs.   The set of all difficult tree pairs appears to grow exponentially with size, but at a slower expontial growth rate than the set of all tree pairs, per work of Cleary and Maio \cite{counthard} suggesting that the fraction of all tree pairs decreases exponentially at a rate of about $0.77^n$, with already ratio of less than 1 in a billion tree pairs of size 70 being difficult and the fraction dropping with further increases of size.
 
  Difficult tree pairs lie at the kernel of a number of questions of interest.  Because the rotation distance problem frequently splits into smaller subproblems, the essential difficulties are contained in the set of difficult tree pairs.  Difficult tree pairs can be used to test estimation algorithms for rotation distance, to find estimates for typical rotation distance between tree pairs selected at random, and look for difficult pathological behavior for rotation distance paths.
  
  This motivates studying difficult tree pairs in their own right. We describe below an efficient algorithm for sampling difficult tree pairs of a specified size.  This sampling is not uniform across all difficult tree pairs of a prescribed size but does have wide coverage of such pairs.

The algorithm we describe can be seen as a variation on Remy's algorithm \cite{remy} for efficiently generating rooted ordered trees uniformly at random, but instead of working on growing the size of a single tree, we grow a pair of trees while applying a filtering criterion.  Unlike Remy's algorithm, the difficult pairs are not sampled uniformly at random but having an efficient (polynomial-time) means of generating pairs is useful for understanding rotation distance problem instances better and for testing the performance of new algorithms.  Computational experiments show that the distribution of selected tree pairs are not uniformly random but there does seem to be wide dispersion, with broad coverage of difficult tree pairs.

\begin{figure}
    \centering
     \includegraphics[width=\textwidth]{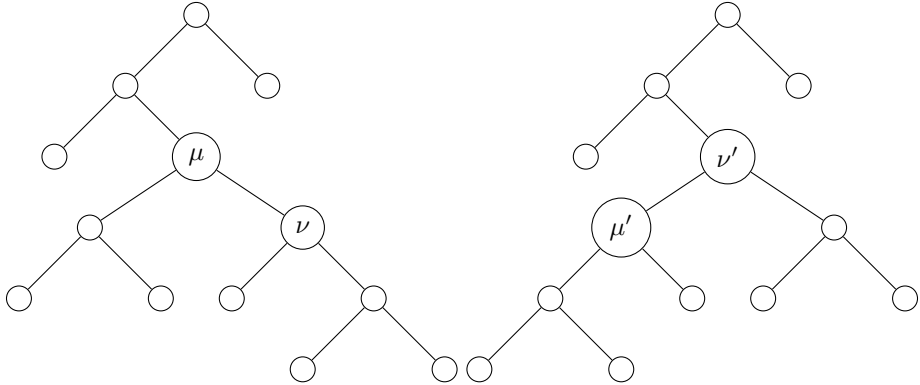}
    \caption{An example of rotation at a node, with a rotation promoting node $\nu$ to node $\nu'$.}
    \label{figrotation}
\end{figure}

\section{Background}
An {\em extended ordered binary tree} is a rooted binary tree where every node has exactly $0$ or $2$ children and whose leaves are labelled starting with $0$ in their order defined by a pre-order traversal from the root. The label of a leaf node $l$ is denoted \func{label}$(l)$. The size of an extended ordered binary tree $T$, denoted $|T|$, is the number of internal nodes $T$ contains. The set of all nodes in $T$ is denoted \func{nodes}($T$). In the following {\em tree} will refer to an extended ordered binary tree and $S$ and $T$ will be trees of the same size.

A rotation in a tree is a local operation which promotes an internal node $\nu$ to the position of its parent $\mu$, demotes $\mu$ to one of $\nu$'s children, and makes one of $\nu$'s children a child of $\mu$, illustrated in Figure \ref{figrotation}. We will denote the partial function that returns the parent of a node by $\pi : \func{nodes}(T) \mapsto \func{nodes}(T)$, with $\pi(\func{root}(T))$ is undefined. 
We adopt the convention that the statement ``rotate at a node $\nu$'', denoted \func{rotate}($\nu$), means to perform that rotation which promotes $\nu$ to the position of its parent. For a tree of size $n$ there are $n - 1$ possible rotations, one for each internal node excepting the root.

 Given a pair of trees $(S, T)$ of the same size, it is possible to transform one into the other by some sequences of rotations. The minimum length of any such sequence defines the {\em rotation distance} between $S$ and $T$, which we denote $d(S, T)$.

The interval of a node $\nu$, \func{interval}$(\nu)$, is the pair $(\alpha, \beta)$ where $\alpha$ is the label of the least-labelled leaf in the tree rooted at $\nu$ and $\beta$ is the label of the greatest-labelled leaf in the tree rooted at $\nu$. The label $\alpha$ is called the {\em lower bound of the interval of $\nu$} and is denoted $\lfloor\func{interval}(\nu)\rfloor$. Similarly, the label $\beta$ is called the {\em upper bound of the interval of $\nu$} and is denoted $\lceil\func{interval}(\nu)\rceil$. If $\nu$ is a leaf, then its lower bound is the same as its upper bound and is defined to be its label. If $\nu$ is an internal node, then its lower bound is the lower bound of its left child, and its upper bound is the upper bound of its right child; formally
\[
    \func{interval}(\nu) =
    \begin{cases}
        (\func{label}(\nu), \func{label}(\nu))   \hfill &\text{if $\nu$ is a leaf}\\
        \hfill (\lfloor\func{interval}(\func{left}(\nu))\rfloor ,\lceil\func{interval}(\func{right}(\nu))\rceil) \hfill &\text{otherwise}
    \end{cases}
\]

The intervals of a tree $T$, denoted \func{intervals}($T$), is the set of all the intervals of the internal nodes of $T$.

The labels $\alpha$ and $\beta$ are related to each other by the size of the subtree rooted at $\nu$ in the following way:

\begin{proposition}\label{betaFromAlpha}
    Let $\nu$ be an internal node of $T$, and $N$ the subtree rooted at $\nu$, and $(\alpha, \beta) = \func{interval}(\nu)$, then $\beta = \alpha + |N|$.
\end{proposition}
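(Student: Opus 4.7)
The plan is to exploit the fact that the pre-order traversal of $T$ visits all the nodes of the subtree $N$ contiguously, and in particular visits all the leaves of $N$ consecutively. Combined with the elementary identity that a full binary tree with $k$ internal nodes has exactly $k+1$ leaves, this should immediately give $\beta = \alpha + |N|$.

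In more detail, I would first establish the auxiliary observation that the leaves of $N$ receive $|N|+1$ consecutive integer labels under the pre-order labelling convention of the definition. This follows because, in a pre-order traversal of $T$, once the traversal enters $\nu$ it visits every node of $N$ before returning to $\pi(\nu)$, so the leaves of $N$ form a contiguous block in the left-to-right ordering of the leaves of $T$. Since $T$'s leaves are labelled $0, 1, 2, \ldots$ in this order, the leaves of $N$ carry consecutive labels $\alpha, \alpha+1, \ldots, \alpha + (|N|+1) - 1 = \alpha + |N|$, with $\alpha$ the smallest and hence $\beta = \alpha + |N|$ the largest. This proves the proposition directly.

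As an alternative, one could proceed by induction on $|N|$ using the recursive definition of \func{interval} stated in the excerpt. The base case $|N|=1$ is immediate: $\nu$ then has two leaf children whose pre-order labels are $\alpha$ and $\alpha+1$. For the inductive step, let $L = \func{left}(\nu)$ and $R = \func{right}(\nu)$ with subtrees $N_L, N_R$ (treating a leaf child as a subtree of size $0$ with degenerate interval). Then $\alpha = \lfloor \func{interval}(L) \rfloor$ and $\beta = \lceil \func{interval}(R) \rceil$, the inductive hypothesis gives $\lceil \func{interval}(L) \rceil = \alpha + |N_L|$ and $\lceil \func{interval}(R) \rceil = \lfloor \func{interval}(R) \rfloor + |N_R|$, and $|N| = 1 + |N_L| + |N_R|$; the desired equality then drops out.

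The main (and essentially only) obstacle in either approach is rigorously justifying the relationship between the intervals of sibling subtrees, namely that $\lfloor\func{interval}(R)\rfloor = \lceil\func{interval}(L)\rceil + 1$. This is really a statement about the pre-order labelling convention: every leaf descended from $L$ is visited strictly before every leaf descended from $R$, and no leaf of $T$ outside $N$ has a label strictly between them. I would state this explicitly, probably as a one-line lemma about pre-order traversal of full binary trees, and cite it from both the direct counting argument and the inductive argument above.
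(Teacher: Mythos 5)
Your main argument is exactly the paper's proof: pre-order traversal visits the subtree $N$ contiguously, so its $|N|+1$ leaves receive the consecutive labels $\alpha, \alpha+1, \ldots, \alpha+|N|$, forcing $\beta = \alpha + |N|$. The inductive alternative and the sibling-interval lemma you flag are fine but unnecessary embellishments; the direct counting argument, as in the paper, already suffices.
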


\begin{proof}
    Recall that $N$ has $|N|+1$ leaves. It is a property of pre-order traversal that once the traversal visits a node it will visit the entire subtree rooted at that node before it visits any other part of the tree. Consequently, when the pre-order traversal reaches $\nu$, the next $|N|+1$ leaf nodes that will be visited will be the leaf nodes of $N$. Thus, the greatest label any leaf in $N$ can have is $\alpha + |N|$ and this must be attained by the last leaf that is visited in $N$.
\end{proof}

In addition to changing one tree into another, a rotation in a tree $T$ at a node $\nu$ has the effect of replacing one of the intervals of the tree by a new one. This new interval is uniquely determined by $T$ and $\nu$ and is denoted $\func{1-interval}(\nu)$. The $\func{1-interval}(\nu)$ can be defined in terms of the intervals of $\nu$, the parent of $\nu$, and the children of $\nu$. If $\nu$ is the left child of its parent, then the lower bound of $\func{1-interval}(\nu)$ is the lower bound of the right child of $\nu$ and the upper bound of $\func{1-interval}(\nu)$ is the upper bound of the parent of $\nu$. If $\nu$ is the right child of its parent, then the lower bound of $\func{1-interval}(\nu)$ is the lower bound of its parent, and the upper bound of $\func{1-interval}(\nu)$ is the upper bound of the left child of $\nu$. Formally
\[
    \func{1-interval}(\nu) =
    \begin{cases}
        (\lfloor\func{interval}(t)\rfloor,\lceil\func{interval}(\pi(\nu))\rceil)    \hfill &\text{if $\nu = \func{left}(\pi(\nu))$}\\
        (\lfloor\func{interval}(\pi(\nu))\rfloor,\lceil\func{interval}(s)\rceil) \hfill &\text{otherwise}
    \end{cases}
\]
where $s = \func{left}(\nu)$ and $t = \func{right}(\nu)$.

 The \func{1-intervals}($T$) is the set of all $n-1$ intervals that can be obtained by rotating some node in $T$.

Trees correspond naturally to the marked triangulations of a polygon, we denote the corresponding triangulation by $\bigtriangleup (T)$. The edges of $\bigtriangleup(T)$ correspond to the intervals($T$).

While the reduction rules were first developed from the perspective of triangulations of the polygon, they may be formulated from the tree perspective in terms of intervals and rotations. A common edge between triangulations corresponds to a common interval occuring in the intervals of both trees. A one-off edge between triangulations corresponds to a common interval that can be obtained by rotating at one of the nodes in $S$ or $T$.

The binary word of $T$, \func{word}($T$), is obtained by beginning with the empty string, traversing $T$ in pre-order and appending at each node a `1' if the node is an internal node and a `0' otherwise. Thus the symbol at the $i$th index in \func{word}($T$) is determined by the $i$th node visited in $T$ by a pre-order traversal. This determines a mapping from symbols in \func{word}($T$) to \func{nodes}($T$).

\begin{definition}
    Let $T$ be an extended ordered binary tree, and let $\nu$ be the $i$-th node visited in a pre-order traversal of $T$. The symbol of $\nu$ in $\func{word}(T)$, denoted $\func{sym_T}(\nu)$, is defined to be the the symbol of $\func{word}(T)$ at index $i$.
\end{definition}

The following property of \func{word}$(T)$ %, along with Proposition \ref{betaFromAlpha} 
gives one method for computing the intervals of $T$.

\begin{proposition}\label{labelFromWord}
    Let $l$ be a leaf node of $T$, then the label of $l$ is given by the number of `0's that precede $\func{sym_T}(l)$.
\end{proposition}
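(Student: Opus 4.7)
The plan is to unpack the two definitions at play, namely the labelling rule for leaves and the encoding rule for $\func{word}(T)$, and then observe that the proposition is essentially a counting statement about what happens before position $i$ in the pre-order traversal.

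First I would note the following reformulation of the labelling convention: since leaves are labelled in pre-order starting at $0$, a leaf $l$ has label equal to the number of leaves strictly preceding $l$ in the pre-order traversal of $T$. This is immediate from the definition of an extended ordered binary tree together with the fact that the first leaf visited receives label $0$, the second receives label $1$, and so on.

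Second, I would apply the definition of $\func{word}(T)$: if $l$ is the $i$-th node visited in pre-order, then $\func{sym}_T(l)$ is the symbol at index $i$ of $\func{word}(T)$, and the symbols at indices $1, \dots, i-1$ correspond bijectively (under the pre-order traversal) to the nodes visited strictly before $l$. Moreover, by the construction of $\func{word}(T)$, each such symbol is a `0' precisely when the corresponding node is a leaf. Thus the number of `0's occurring before $\func{sym}_T(l)$ in $\func{word}(T)$ equals the number of leaves of $T$ visited strictly before $l$ in the pre-order traversal.

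Combining the two steps, the number of `0's preceding $\func{sym}_T(l)$ equals the number of leaves visited strictly before $l$, which in turn equals $\func{label}(l)$. There is no real obstacle here; the argument is simply bookkeeping about pre-order traversal, so I would keep the write-up short and avoid introducing extra notation.
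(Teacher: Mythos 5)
Your proof is correct and follows essentially the same route as the paper's: both arguments reduce the claim to the observation that the label of $l$ counts the leaves visited before $l$ in pre-order, and that each such leaf contributes exactly one `0' to $\func{word}(T)$ before $\func{sym_T}(l)$. Your write-up just makes the symbol-to-node bijection slightly more explicit than the paper does.
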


\begin{proof}
    Suppose the label of $l$ is $\alpha$. By the definition of label, $l$ is the $(\alpha + 1)$st leaf node visited in the preorder traversal of $T$. In computing $\func{word}(T)$, therefore, exactly $\alpha$ `0's must have been appended before $\func{sym_T}(l)$ is appended.
\end{proof}

\begin{theorem}
    \label{upperFromSize}
    Let $T$ be an extended ordered binary tree, $\nu$ an internal node of $T$, $N$ the subtree of $T$ rooted at $\nu$, and $(\alpha, \beta) = \func{interval}(\nu)$. Then $\alpha$ is given by the number of 0's that precede $\func{sym_T}(\nu)$, and $\beta = \alpha + |N|$.
\end{theorem}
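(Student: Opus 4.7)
The plan is to decompose the theorem into its two claims and dispatch them separately, leaning on the earlier propositions. The equality $\beta = \alpha + |N|$ is immediate: it is exactly the content of Proposition \ref{betaFromAlpha}, so nothing new is needed there. The work is entirely in showing that $\alpha$ equals the number of $0$'s preceding $\func{sym_T}(\nu)$ in $\func{word}(T)$.

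For that first claim, I would proceed in three steps. First, identify $\alpha$ concretely: by definition of \func{interval}, $\alpha$ is the label of the leftmost leaf $l$ of $N$, i.e., the leaf reached from $\nu$ by repeatedly following left children. Second, argue that $l$ is in fact the first leaf visited, in the pre-order traversal of $T$, after $\nu$ itself is visited; this is immediate because pre-order enters $\nu$, then recurses on its left child, and each node on the left spine of $N$ is internal until $l$ is reached, so no leaves are visited strictly between $\nu$ and $l$. Third, conclude that no $0$'s are written into $\func{word}(T)$ between $\func{sym_T}(\nu)$ and $\func{sym_T}(l)$: every symbol appended between them corresponds to an internal node on the left spine of $N$ and is therefore a $1$. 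Hence the count of $0$'s preceding $\func{sym_T}(\nu)$ equals the count of $0$'s preceding $\func{sym_T}(l)$, and by Proposition \ref{labelFromWord} the latter is exactly $\func{label}(l) = \alpha$.

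There is no real obstacle here; the only subtlety is making the ``left spine'' observation precise, i.e., that pre-order visits $\nu$ followed by a (possibly empty) chain of internal nodes culminating in $l$, with no leaves appearing before $l$. I would state this as a brief remark justified by unrolling the recursive definition of pre-order traversal. Combining the two paragraphs gives the theorem.
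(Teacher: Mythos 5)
Your proposal is correct and follows essentially the same route as the paper: both reduce the claim about $\alpha$ to Proposition \ref{labelFromWord} by showing that no $0$'s occur in $\func{word}(T)$ between $\func{sym_T}(\nu)$ and the symbol of the leaf $l$ with label $\alpha$, and both dispatch $\beta = \alpha + |N|$ immediately via Proposition \ref{betaFromAlpha}. The only difference is cosmetic: you establish the ``no intervening $0$'s'' fact directly by observing that the nodes visited between $\nu$ and $l$ form the left spine of $N$ and are all internal, whereas the paper argues by contradiction, noting that an intervening $0$ would yield a leaf of $N$ with label smaller than $\alpha$, contradicting minimality.
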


\begin{proof}
    To prove $\alpha$ is given by the number of 0's that precede $\func{sym_T}(\nu)$ it suffices, by Proposition \ref{labelFromWord}, to show that the symbol of the leaf node, $l$, with label $\alpha$ is the first 0 that proceeds $\func{sym_T}(\nu)$. Suppose this is not the case, then there is at least one $0$ that proceeds $\func{sym_T}(\nu)$ and precedes $\func{sym_T}(l)$. Then there is some leaf node $k$ in the subtree rooted at $\nu$ that is visited after $\nu$ and before $l$. So the label of $k$ is at most $\alpha-1$, but this contradicts the assumption that $l$ is the least labelled leaf. Finally, from Proposition \ref{betaFromAlpha} it follows that $\beta = \alpha + |N|$.
\end{proof}

We let $\circ$ denote string concatenation and we let $\nu$ be a node of a tree. We define the functions \func{left}($\nu$) and \func{right}($\nu$) to return the left or right child or $\nu$ respectively. A recursive definition for \func{word}($\nu$) can then be given as follows
\[
    \func{word}(\nu) =
    \begin{cases}
         \hfill 1\circ\func{word}(\func{left}(\nu))\circ\func{word}(\func{right}(\nu))   \hfill & \text{if $\nu$ is an internal node}\\
         0                               \hfill & \text{otherwise}
    \end{cases}
\]

With this definition $\func{word}(T) = \func{word}(r)$ where $r$ is the root of $T$.

Remy's algorithm \cite{remy} is a method for sampling trees of a fixed size uniformly at random by growing a tree larger at each stage ensuring that each possible tree of that size is equally likely to be generated. The algorithm begins with a tree of size 1 and iteratively grows the tree until a tree of the desired size is obtained. On each iteration, one of the internal or external nodes, say $\nu$, of the current tree, say $T$, is selected uniformly at random. Then a new node, $\mu$, is created. The new node $\mu$ takes the place of $\nu$ in the tree, and $\nu$ is set as the left or right child of $\mu$ with equal probability. We say that the resulting tree is obtained from $T$ by growing left (or right) at $\nu$.
\\
\indent If a tree $S$ may be grown in some way by an iteration of Remy's algorithm to obtain a tree $T$, then we call $T$ a growth neighbor of $S$ and denote the set of all growth neighbors of $S$ by \func{growthNeighbors}($S$).

On an iteration of Remy's algorithm, if an external node is chosen to be grown, then growing left or right will result in the same tree. Thus, an upper bound on the number of growth neighbors a tree of size $n$ may have is $3n + 1$.

\section{Difficult Pair Sampling Algorithm}
The difficult pair sampling algorithm, DPS, begins by randomly choosing one of the 4 difficult pairs of trees of size 4. We call these difficult pairs primitive because there are no difficult pairs of trees that are smaller. The algorithm then iteratively grows the pair of trees in size by 1 until a pair of the desired size is obtained. On each iteration, for the current pair of trees $S$ and $T$, DPS finds all difficult pairs of trees ($U$, $V$) such that $U$ is a growth neighbor of $S$ and $V$ is a growth neighbor of $V$ and randomly selects one of these pairs to be the next $S$ and $T$.

\begin{codebox}
\Procname{$\proc{DPS}(n)$}
\li $S$, $T \gets \func{randomPrimitiveDifficultPair}()$
\li \For $i \gets 5$ \To $n$
\li     \Do
            $\id{choices} \gets \emptyset$
\li         \For $U$ \kw{in} $\func{growthNeighbors}(S)$
\li             \Do
                    \For $V$ \kw{in} $\func{growthNeighbors}(T)$
\li                     \Do
                            \If $\func{isDifficultPair}(U, V)$
\li                             \Then $\attrib{\id{choices}}{add}((U, V))$
                            \End
                        \End
                \End
\li         $S$, $T \gets \attrib{\id{choices}}{randomElement}()$
        \End
\li \Return $S$, $T$
\end{codebox}

What is not obvious about DPS is that for an arbitrary difficult pair $(S, T)$, it is always possible to grow $S$ and $T$ into a difficult pair $(U, V)$. We will show that this is the case by examining a particular growth neighbor- there may be additional ones but a single one suffices for proving the correctness of the algorithm.

\begin{definition}
    Let $T$ be an extended ordered binary tree of size $n$, and $\omega$ be the internal node of $T$ whose right child is the leaf with label $n$. The extended ordered binary tree of size $n+1$, obtained by growing $T$ at $\omega$ left, will be denoted $\sigma(T)$.
\end{definition}

We will show that given a difficult pair $(S, T)$, the pair of trees $(\sigma(S), \sigma(T))$ is also a difficult pair. The proof that $(\sigma(S), \sigma(T))$ is a hard pair will rest on the relation between \func{intervals}($T$) to \func{intervals}($\sigma(T)$) and \func{1-intervals}($T$) to \func{1-intervals}($\sigma(T)$).

Relating \func{intervals}($T$) to \func{intervals}($\sigma(T)$) and \func{1-intervals}($T$) to \func{1-intervals}($\sigma(T)$) will require relating \func{word}($T$) to \func{word}($\sigma(T)$) which we will do next.

\begin{lemma}
    \label{word2word}
    Let $T$ be an extended ordered binary tree of size $n$, then $\func{word}(T) = \Lambda\Omega$ and $\func{word}(\sigma(T)) = \Lambda 1\Omega 0$
\end{lemma}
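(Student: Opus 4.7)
The plan is to exhibit $\Lambda$ and $\Omega$ explicitly. I would set $\Omega := \func{word}(\omega)$ (the word of the subtree rooted at $\omega$, as given by the recursive definition just before the lemma) and let $\Lambda$ be the prefix of $\func{word}(T)$ contributed by all nodes visited strictly before $\omega$ in the preorder traversal of $T$.

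The first step is to check that $\func{word}(T) = \Lambda\Omega$. Because $\omega$'s right child is the leaf with label $n$, it is the last leaf visited in a preorder traversal of $T$; hence after the traversal enters the subtree rooted at $\omega$ it never leaves it, so every node of $T$ either precedes $\omega$ in preorder or lies in the subtree rooted at $\omega$. Concatenating the two blocks of contributions and using the recursive definition of $\func{word}(\omega)$ gives $\func{word}(T) = \Lambda\Omega$.

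The second step is to compute $\func{word}(\sigma(T))$. By definition, $\sigma(T)$ is obtained by inserting a new internal node $\mu$ into $\omega$'s former slot, with $\omega$ as $\mu$'s left child and a fresh leaf as $\mu$'s right child; every node outside the subtree at $\mu$ sits in exactly the same structural position as it did in $T$ outside the subtree at $\omega$. Therefore the preorder traversal of $\sigma(T)$ visits exactly the same nodes as in $T$ before it reaches $\mu$, contributing the same prefix $\Lambda$; then $\mu$ contributes a $1$; then the subtree at $\mu$'s left child, which is an unmodified copy of the subtree at $\omega$ in $T$, contributes $\func{word}(\omega) = \Omega$; finally the new leaf contributes a $0$. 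Concatenating yields $\func{word}(\sigma(T)) = \Lambda 1 \Omega 0$.

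The only subtle point is the claim that the preorder prefix up to $\mu$ in $\sigma(T)$ coincides with the preorder prefix up to $\omega$ in $T$. I would justify this by induction on the depth of $\omega$: $\mu$'s parent in $\sigma(T)$ is $\omega$'s parent in $T$ (or $\mu$ is the root if $\omega$ was), and $\mu$ sits on the same side (left/right) of its parent as $\omega$ did, so the ancestors of $\omega$ and all of their left-side subtrees that precede the path to $\omega$ are identical in both trees. Once this is observed the lemma follows directly from the recursive definition of $\func{word}$, so I do not expect any serious obstacle beyond carefully unwinding the preorder bookkeeping.
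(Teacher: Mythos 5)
Your proof is correct and follows essentially the same route as the paper's: take $\Omega = \func{word}(\omega)$, observe that the subtree at $\omega$ occupies a suffix of the preorder word so $\func{word}(T) = \Lambda\Omega$, and then note that the new node inherits $\omega$'s slot with $\omega$ as its left child and a fresh leaf as its right child, giving $\func{word}(\sigma(T)) = \Lambda 1\Omega 0$. Your extra justifications (the preorder suffix argument and the induction on depth for prefix preservation) simply make explicit what the paper leaves implicit.
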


\begin{proof}
    Let $\omega$ be the parent of the leaf node with label $n$ in $T$, this implies that $\omega$, and all of its ancestors are either the root or the right child of their parent. From the definition of \func{word} it follows that $\func{word}(T)$ is of the form $\Lambda\Omega$ where $\Omega = \func{word}(\omega)$.\\
    \indent When $T$ is grown left at $\omega$, the new node, $\phi$, will take $\omega$ as its left child and become the right child of $\omega$'s former parent. Consequently, $\func{word}(\sigma(T))$ will be $\Lambda\Phi$ where $\Phi = \func{word}(\phi)$. %Applying the definition of \func{word}.
    \begin{align*}
        \func{word}(\phi) &= 1\circ\func{word}(\func{left}(\phi))\circ\func{word}(\func{right}(\phi))\\
                          &= 1\circ\func{word}(\omega)\circ 0\\
                     \Phi &= 1\Omega 0
    \end{align*}
\end{proof}

\begin{figure}
    \centering
    \includegraphics[width=4in]{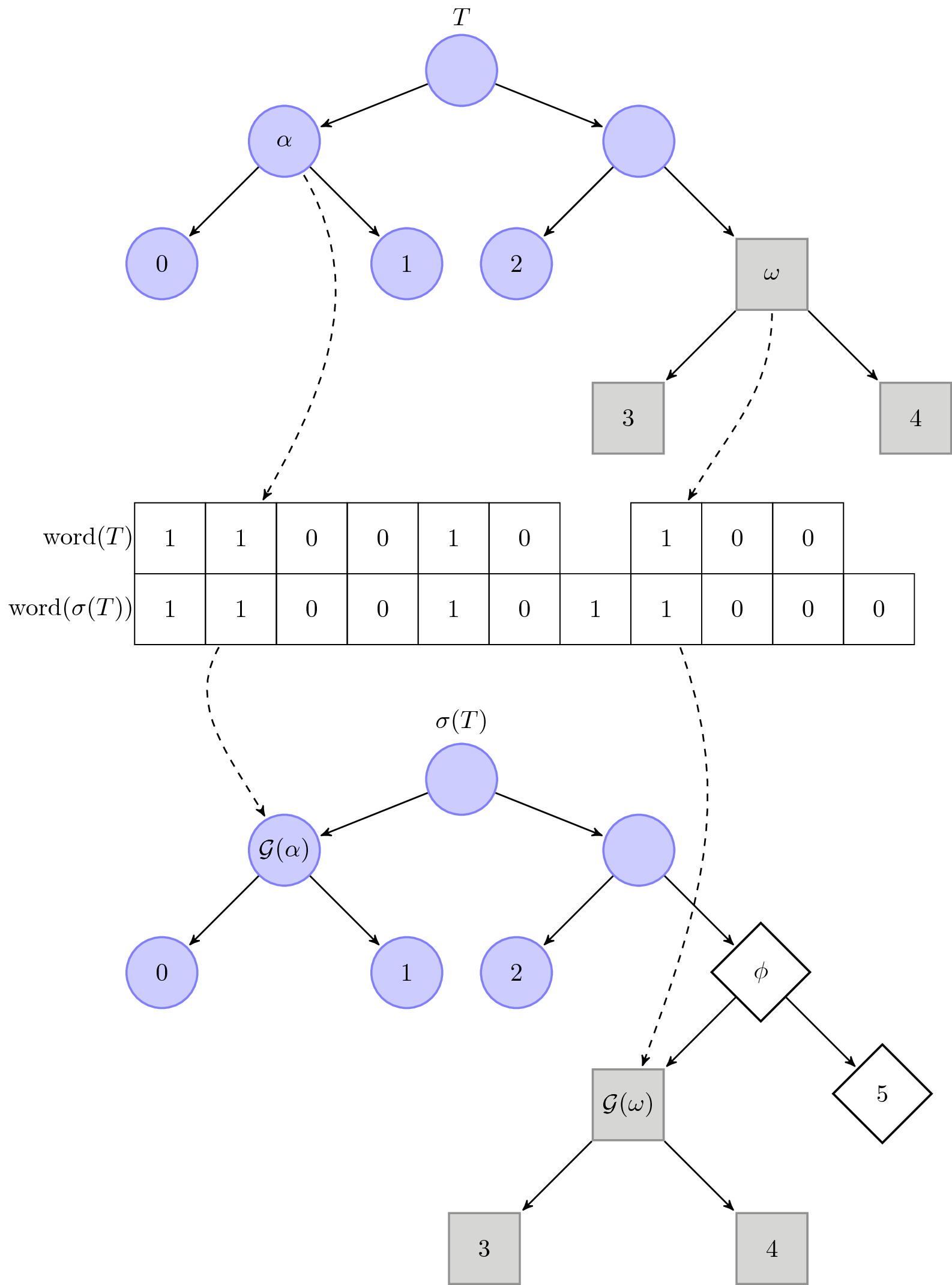}
    \caption{Growing a node and preserving the difficulty of the pair.}
    \label{fig:nodeGrowth}
\end{figure}

There is a natural way in which the nodes in $T$ correspond to the nodes in $\sigma(T)$. The intuition for this is shown in Figure \ref{fig:nodeGrowth}. This correspondence can be formalized in terms of the relation between \func{word}$(T)$ and \func{word}$(\sigma(T))$.

\begin{definition}
    Let $T$ be an extended ordered binary tree. The natural growth injection of the nodes of $T$ to the nodes of $\sigma(T)$, $\mathcal{G}: \func{nodes}(T) \mapsto \func{nodes}(\sigma(T))$ is defined as
    
    $\mathcal{G}(\nu) = \func{sym_{\sigma(T)}^{-1}}(\func{word}(\sigma(T))(i + \bm{1}\{|\Lambda| < i\}))$
    
    where $i$ is the index of $\func{sym}_T(\nu)$ and $\bm{1\{ \ldots \}}$ is the indicator function.
\end{definition}

There are several properties of $\mathcal{G}$ which will be critical to proving our claim that DPS can always grow a difficult pair $(S, T)$ into another difficult pair $(U, V)$. The first that we will examine relates the interval of a node, $\nu$, in $T$ to the interval of $\mathcal{G}(\nu)$ in $\sigma(T)$.

\begin{lemma}
    \label{interval2interval}
    Let $T$ be an extended ordered binary tree of size $n$, $\omega$ be the node of $T$ whose right child is the leaf with label $n$, $\nu$ be any node of $T$ that is not $\omega$ and $(\alpha, \beta) = \func{interval}(\nu)$. Then $\func{interval}(\mathcal{G}(\nu)) = (\alpha, \beta + \bm{1}\{\beta = n\})$.
\end{lemma}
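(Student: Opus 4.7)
The plan is to combine Lemma \ref{word2word}, which tracks how the pre-order word changes under growth at $\omega$, with Theorem \ref{upperFromSize}, which extracts the interval of any internal node from its index in the word and its subtree size. Write $\func{word}(T) = \Lambda\Omega$ and $\func{word}(\sigma(T)) = \Lambda 1\Omega 0$, and let $i$ be the index of $\func{sym}_T(\nu)$. By the definition of $\mathcal{G}$, $\func{sym}_{\sigma(T)}(\mathcal{G}(\nu))$ lies at index $i$ when $i \leq |\Lambda|$ and at index $i+1$ when $i > |\Lambda|$, and in either case the symbol at that index agrees with the symbol at index $i$ of $\func{word}(T)$.

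First I would verify the lower bound. By Theorem \ref{upperFromSize}, $\alpha$ equals the number of $0$'s in $\func{word}(T)$ strictly preceding index $i$. If $i \leq |\Lambda|$, the prefixes of $\func{word}(T)$ and $\func{word}(\sigma(T))$ agree up to $i-1$, so the count is unchanged. If $i > |\Lambda|$, the prefix of $\func{word}(\sigma(T))$ preceding $\mathcal{G}(\nu)$ differs only by the inserted '$1$' at position $|\Lambda|+1$, which contributes no $0$'s, so the count is still $\alpha$. Thus the lower bound of $\func{interval}(\mathcal{G}(\nu))$ is $\alpha$.

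Next I would handle the upper bound by splitting on whether $\omega$ is a strict descendant of $\nu$ in $T$. The only structural change from $T$ to $\sigma(T)$, as suggested by Figure \ref{fig:nodeGrowth}, is the insertion of $\phi$ as the new parent of $\omega$ along with a new rightmost leaf (labeled $n+1$) as $\phi$'s right child. If $\omega$ is not in the subtree rooted at $\nu$, then this subtree is unaffected by the growth, so the subtree at $\mathcal{G}(\nu)$ in $\sigma(T)$ is isomorphic (with the same leaf labels) to the subtree at $\nu$; in particular $|N|$ is preserved and the upper bound remains $\beta$. Simultaneously, $\nu$'s subtree does not contain leaf $n$, so $\beta < n$ and $\mathbf{1}\{\beta = n\} = 0$. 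If instead $\omega$ is a strict descendant of $\nu$, then the subtree at $\mathcal{G}(\nu)$ in $\sigma(T)$ gains exactly one additional internal node (namely $\phi$), so by Theorem \ref{upperFromSize} its upper bound is $\alpha + (|N|+1) = \beta + 1$; at the same time leaf $n$ lies in $\nu$'s subtree, forcing $\beta = n$. Both cases agree with the stated formula.

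The main obstacle is making the subtree-comparison step precise and verifying the equivalence ``$\omega$ is a strict descendant of $\nu$ $\Leftrightarrow$ $\beta = n$''. This relies on $\omega$ being the unique internal parent of leaf $n$, so that for an internal $\nu \neq \omega$, leaf $n$ lies in $\nu$'s subtree iff $\omega$ does. One subtlety worth noting is the case $\nu = $ leaf $n$, where $\beta = n$ holds trivially even though $\omega$ is not a descendant of $\nu$; here the label of $\mathcal{G}(\nu)$ in $\sigma(T)$ is still $n$, so the formula as stated only matches when the lemma is applied to internal nodes (which is all that is needed for the downstream arguments about $\func{intervals}$ and $\func{1-intervals}$).
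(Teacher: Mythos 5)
Your proof is correct and rests on the same machinery as the paper's --- Lemma \ref{word2word} plus Theorem \ref{upperFromSize} --- but you organize it differently. The paper partitions $\func{nodes}(T) - \{\omega\}$ three ways by interval bounds (lower bound below $\gamma$ and upper bound below $n$; lower bound below $\gamma$ and upper bound equal to $n$; lower bound at least $\gamma$, where $(\gamma, n) = \func{interval}(\omega)$), and re-derives the lower-bound preservation separately inside each case; you prove the lower bound once, uniformly, by noting that the inserted `1' contributes no `0's to the prefix count, and then split the upper-bound analysis two ways by whether $\omega$ lies in the subtree rooted at $\nu$. Your two cases are the paper's cases 1 and 3 merged (subtree untouched) versus its case 2 (subtree gains $\phi$). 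The one step you flag as imprecise --- that the subtree of $\mathcal{G}(\nu)$ is a verbatim copy of that of $\nu$ when $\omega$ is not in it --- is exactly what the paper's per-case word arguments supply: in its cases 1 and 3 it shows $\func{word}(\nu) = \func{word}(\mathcal{G}(\nu))$ as substrings of the two words, from which equal subtree size, and hence equal upper bound via Theorem \ref{upperFromSize}, follows. Filling your gap that way makes the two proofs essentially coincide, with yours the tidier organization.

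Your closing caveat is a genuine catch rather than a pedantic one: for $\nu$ equal to the leaf labelled $n$, the stated formula is false, since $\func{interval}(\mathcal{G}(\nu)) = (n, n)$ while the formula predicts $(n, n+1)$. The paper's own proof does not escape this: that leaf falls into its third case ($\gamma \leq \alpha$), where the argument correctly shows the interval is preserved but then asserts $(\alpha, \beta) = (\alpha, \beta + \bm{1}\{\beta = n\})$, which fails when $\beta = n$. The lemma should be restricted to internal nodes, or to all nodes other than that leaf; as you observe, the downstream applications in Lemmas \ref{intervals2intervals} and \ref{1intervalsto1intervals} only invoke it in that restricted form, so nothing later breaks.
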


\begin{proof}
    Let $(\gamma, n) = \func{interval}(\omega)$. By Lemma \ref{word2word} we have $\func{word}(T) = \Lambda\Omega$ and $\func{word}(\sigma(T)) = \Lambda 1\Omega 0$. Now we partition $\func{nodes}(T) - \{\omega\}$ into three sets:
    \begin{enumerate}
        \item $\{\nu : (\alpha, \beta) = \func{interval}(\nu), \alpha < \gamma, \beta < n\}$ all nodes that have an interval with a lower bound less than the lower bound of $\omega$ and with an upper bound that is less than $n$.
        \item $\{\nu : (\alpha, \beta) = \func{interval}(\nu), \alpha < \gamma, \beta = n\}$ all nodes that have an interval with a lower bound less than the lower bound of $\omega$ and with an upper bound equal to $n$.
        \item $\{\nu : (\alpha, \beta) = \func{interval}(\nu), \gamma \leq \alpha\}$ all nodes that have an interval with a lower bound greater than or equal to the lower bound of $\omega$.
    \end{enumerate}
    
    We consider the first case. Let $N$ be the subtree rooted at $\nu$, $a$ be the leaf with label $\alpha$, $b$ be the leaf with label $\beta$ and $g$ be the leaf with label $\gamma$. Since $\alpha < \gamma$, the symbol of $a$ in $\func{word}(T)$ must precede the symbol of $g$ in $\func{word}(T)$ and therefore $\func{sym}_T(a) \in \Lambda$ and $\func{sym}_T(\nu) \in \Lambda$. By the definition of $\mathcal{G}$ it follows that the index of $\func{sym}_{\sigma(T)}(\mathcal{G}(\nu))$ is the same as the index of $\func{sym}_T(\nu)$ and so $\func{sym}_{\sigma(T)}(\mathcal{G}(\nu)) \in \Lambda$. Therefore, by Proposition \ref{labelFromWord} the lower bound of $\mathcal{G}(\nu)$ must be $\alpha$. Since $\beta < n$ it follows $\beta < \gamma$, otherwise $g$ is in $N$, but this would imply that $\omega$ is also in $N$ and so $\beta = n$, which is impossible. Consequently the symbol of $b$ in $\func{word}(T)$ must precede the symbol of $g$ in $\func{word}(T)$ and therefore $\func{sym}_T(b) \in \Lambda$ and $\func{sym}_{\sigma(T)}(\mathcal{G}(b)) \in \Lambda$. This implies that $\func{word}(\nu) = \func{word}(\mathcal{G}(\nu))$ and so the size of the subtree rooted at $\mathcal{G}(\nu)$ is the same as the size of $N$. Applying Theorem \ref{upperFromSize} it follows that the upper bound of $\mathcal{G}(\nu)$ is $\beta$. Therefore $\func{interval}(\mathcal{G}(\nu)) = (\alpha, \beta) = (\alpha, \beta + \bm{1}\{\beta = n\})$.
    
    Now we consider the second case. Let $N$ be the subtree rooted at $\nu$, $a$ be the leaf with label $\alpha$, $b$ be the leaf with label $\beta$ and $g$ be the leaf with label $\gamma$. Since $\alpha < \gamma$, the symbol of $a$ in $\func{word}(T)$ must precede the symbol of $g$ in $\func{word}(T)$ and so $\func{sym}_T(a) \in \Lambda$ and $\func{sym}_T(\nu) \in \Lambda$. By the definition of $\mathcal{G}$, the index of $\func{sym}_{\sigma(T)}(\mathcal{G}(\nu))$ is the same as the index of $\func{sym}_T(\nu)$ and by Proposition \ref{labelFromWord} the lower bound of $\mathcal{G}(\nu)$ is $\alpha$. Since $\beta = n$, $\omega$ must be contained in the subtree rooted at $\nu$. Therefore, when $T$ is grown left at $\omega$, the subtree rooted at $\mathcal{G}(\nu)$ will be larger in size by one than $N$. Applying Theorem \ref{upperFromSize} it follows that the upper bound of $\mathcal{G}(\nu)$ is $\beta + 1$. And so $\func{interval}(\mathcal{G}(\nu)) = (\alpha, \beta + 1) = (\alpha, \beta + \bm{1}\{\beta = n\})$.
    
    Finally, we consider the third case. Let $N$ be the subtree rooted at $\nu$. Since $\gamma \leq \alpha$, $\nu$ is a descendant of $\omega$. Therefore $\func{word}(\nu)$ is a proper substring of $\Omega$. Observe that for $|\Lambda | < i \leq |\func{word}(T)|$ we have $\func{word}(T)(i) = \func{word}(\sigma(T))(i+1)$, combined with the definition of $\mathcal{G}$ it follows that $\func{word}(\nu) = \func{word}(\mathcal{G}(\nu))$ so the size of the subtree rooted at $\mathcal{G}(\nu)$ is the same as the size of $N$. Since the number of `0's which precede $\func{sym}_{\sigma(T)}(\mathcal{G}(\nu))$ is the same as the number of `0's that precede $\func{sym}_T(\nu)$ it follows that the lower bound of $\mathcal{G}(\nu)$ is $\alpha$. Therefore $\func{interval}(\mathcal{G}(\nu)) = (\alpha, \beta) =  (\alpha, \beta + \bm{1}\{\beta = n\})$.
    
\end{proof}

The growth injection $\mathcal{G}$ also captures several of the node-to-node relationships of $T$ which are preserved in $\sigma(T)$: $\mathcal{G}$ preserves the relation between parents and their left children, $\mathcal{G}$ preserves the relation between parents and their right children, except for the parent of the node whose right child is the leaf labelled $n$, and taken together, a consequence of the preceding two properties is that $\mathcal{G}$ preserves the relation between parents and children except for the node whose right child is the leaf labelled $n$. We will next state and prove these relationships formally because we will exploit them in proving the relationship between $\func{1-intervals}(T)$ and $\func{1-intervals}(\sigma(T))$.

\begin{lemma}
    \label{left2Gleft}
    Let $T$ be an extended ordered binary tree of size $n$ and $\omega$ be the internal node of $T$ whose right child is the leaf with label $n$. For any internal node $\nu \in T$, the image of the left child of $\nu$ under $\mathcal{G}$ is the left child of the image of $\nu$ under $\mathcal{G}$, that is, $\mathcal{G}(\func{left}(\nu)) = \func{left}(\mathcal{G}(\nu))$.
\end{lemma}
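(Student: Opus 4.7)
The plan is to work entirely with the word representations, resting on two simple facts: (a) in any extended ordered binary tree, the pre-order index of the left child of an internal node $u$ is one more than the pre-order index of $u$, since $u$'s left subtree is traversed immediately after $u$; and (b) by the definition of $\mathcal{G}$, the index of $\func{sym}_{\sigma(T)}(\mathcal{G}(u))$ in $\func{word}(\sigma(T)) = \Lambda 1\Omega 0$ is obtained from the index of $\func{sym}_T(u)$ in $\func{word}(T) = \Lambda\Omega$ by adding $1$ when the latter strictly exceeds $|\Lambda|$ and leaving it unchanged otherwise.

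Writing $i$ for the index of $\func{sym}_T(\nu)$, fact (a) places $\func{left}(\nu)$ at index $i+1$ in $\func{word}(T)$, and fact (b) places $\mathcal{G}(\nu)$ at index $i + \bm{1}\{|\Lambda| < i\}$ and $\mathcal{G}(\func{left}(\nu))$ at index $(i+1) + \bm{1}\{|\Lambda| < i+1\}$ in $\func{word}(\sigma(T))$. The goal then reduces to showing that these two indices are consecutive and that the symbol at the first of them is $1$, so that fact (a) applied inside $\sigma(T)$ finishes the argument.

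The key preliminary step is to rule out the boundary case $i = |\Lambda|$, where the two indicators disagree and the images would be separated by the inserted symbol of $\phi$, producing a gap of $2$. Invoking Lemma \ref{word2word} and the structural observation in its proof, $\omega$ is either the root of $T$, in which case $|\Lambda|=0$ and no index equals $|\Lambda|$, or a right child of its parent, in which case the node at index $|\Lambda|$ is the pre-order predecessor of $\omega$, namely the rightmost leaf of the left sibling subtree of $\omega$. Either way, if $|\Lambda| \geq 1$ the symbol at index $|\Lambda|$ is $0$, so the internal node $\nu$ cannot sit there and $i \neq |\Lambda|$.

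With the boundary excluded, two clean subcases remain. If $i < |\Lambda|$, both indicators vanish and $\mathcal{G}(\nu), \mathcal{G}(\func{left}(\nu))$ occupy consecutive indices $i, i+1$ inside the unchanged prefix $\Lambda$; the symbol at index $i$ is still $1$, so $\mathcal{G}(\nu)$ is internal and fact (a) in $\sigma(T)$ identifies $\func{left}(\mathcal{G}(\nu))$ with $\mathcal{G}(\func{left}(\nu))$. If $i > |\Lambda|$, both indicators equal $1$ and the two images occupy the consecutive indices $i+1, i+2$ inside the block $1\Omega 0$; the symbol at index $i+1$ in $\func{word}(\sigma(T))$ coincides with the symbol of $\nu$ in $\func{word}(T)$ under the shift, hence is $1$, and fact (a) again closes the argument. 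The main obstacle, and essentially the only point requiring care, is the boundary analysis at $i = |\Lambda|$, which hinges on the structural fact, extracted from Lemma \ref{word2word}, that $\omega$ is never a left child.
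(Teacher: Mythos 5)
Your proposal is correct and takes essentially the same route as the paper's proof: both arguments reduce the claim to index arithmetic in $\func{word}(\sigma(T)) = \Lambda 1\Omega 0$, use the fact that a left child immediately follows its parent in pre-order, split on whether the index of $\nu$ exceeds $|\Lambda|$, and exclude the boundary case $i = |\Lambda|$ by observing that $\Lambda$ ends in a `0' while the internal node $\nu$ carries a `1'. Your explicit handling of the case where $\omega$ is the root and your remark that the symbol at the image index is still `1' (so $\mathcal{G}(\nu)$ is internal in $\sigma(T)$) are minor tightenings of the same argument.
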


\begin{proof}
    It will suffice to show that the index of the symbol of $\mathcal{G}(\func{left}(\nu))$ in the \func{word} of $\sigma(T)$ is the same as the index of the symbol of $\func{left}(\mathcal{G}(\nu))$ in the \func{word} of $\sigma(T)$. Let $i$ be the index of $\func{sym}_T(\nu)$ and consider two cases as to whether $|\Lambda| < i$ or not.
    
    In the first case, the index of $\func{sym}_T(\func{left}(\nu)) = i + 1$ and so the index of $\func{sym}_{\sigma(T)}(\mathcal{G}(\func{left}(\nu))) = i + 2$. By definition, the index of $\func{sym}_{\sigma(T)}(\mathcal{G}(\nu)) = i + 1$ giving index $\func{sym}_{\sigma(T)}(\func{left}(\mathcal{G}(\nu))) = i + 2$.
    
    In the second case, we must verify that $i+1 \leq |\Lambda|$. Since $\nu$ is an internal node, its symbol must be a `1' and since the suffix of $\Lambda$ is the word of the left child of the parent of $\omega$, the last symbol in $\Lambda$ must be a `0' and so $i \leq |\Lambda| - 1$ which implies $i + 1 \leq |\Lambda|$. Now the index of $\func{sym}_T(\func{left}(\nu)) = i + 1$. Since $i + 1 < |\Lambda|$ the index of $\func{sym}_{\sigma(T)}(\mathcal{G}(\func{left}(\nu))) = i + 1$. By definition, the index of $\func{sym}_{\sigma(T)}(\mathcal{G}(\nu)) = i$ and so the index of $\func{sym}_{\sigma(T)}(\func{left}(\mathcal{G}(\nu))) = i + 1$.
\end{proof}

\begin{lemma}
    \label{right2Gright}
    Let $T$ be an extended ordered binary tree of size $n$ and $\omega$ be the internal node of $T$ whose right child is the leaf with label $n$. For any internal node $\nu \in T$, except for $\pi(\omega)$, the image of the right child of $\nu$ under $\mathcal{G}$ is the right child of the image of $\nu$ under $\mathcal{G}$, that is $\mathcal{G}(\func{right}(\nu)) = \func{right}(\mathcal{G}(\nu))$.
\end{lemma}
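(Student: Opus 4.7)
The plan is to mirror the proof of Lemma \ref{left2Gleft}: let $i$ denote the index of $\func{sym}_T(\nu)$ and split cases on whether $i \leq |\Lambda|$ or $|\Lambda| < i$. In each case I will compute the index of $\func{sym}_{\sigma(T)}(\mathcal{G}(\func{right}(\nu)))$ from the definition of $\mathcal{G}$ and compare it with the index of $\func{sym}_{\sigma(T)}(\func{right}(\mathcal{G}(\nu)))$, which can be read off from the recursive definition of \func{word} applied at $\mathcal{G}(\nu)$.

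The key arithmetic identity I would use is that the symbol of $\func{right}(\nu)$ in $\func{word}(T)$ sits at index $j = i + 1 + L$, where $L = |\func{word}(\func{left}(\nu))|$, and analogously the symbol of $\func{right}(\mathcal{G}(\nu))$ sits at index $i' + 1 + L'$ in $\func{word}(\sigma(T))$ for the corresponding quantities of $\mathcal{G}(\nu)$. Because $\omega$ and each of its proper ancestors is a right child (the observation used in the proof of Lemma \ref{word2word}), $\omega$ is never contained in the left subtree of $\nu$, so the growth $\sigma$ leaves $\func{left}(\nu)$ structurally untouched; combined with Lemma \ref{left2Gleft}, this gives $L' = L$.

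In the case $|\Lambda| < i$, $\nu$ is a descendant of $\omega$ and so is $\func{right}(\nu)$, so $j > |\Lambda|$ and both indices acquire the $+1$ shift from $\mathcal{G}$: $\mathcal{G}(\nu)$ sits at index $i + 1$, $\mathcal{G}(\func{right}(\nu))$ at index $j + 1$, and the right child of $\mathcal{G}(\nu)$ at $(i + 1) + 1 + L = j + 1$, matching. In the case $i \leq |\Lambda|$, the hypothesis $\nu \neq \pi(\omega)$ ensures $\func{right}(\nu) \neq \omega$. When $\omega$ is not in the subtree of $\nu$, the word $\func{word}(\func{right}(\nu))$ lies entirely in $\Lambda$; when $\omega$ is in the subtree of $\nu$, it lies in the right subtree, and the exclusion forces $\func{right}(\nu)$ to be a proper ancestor of $\omega$. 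In either situation $\func{right}(\nu)$'s symbol is visited before $\omega$'s in pre-order, so $j \leq |\Lambda|$, no shift applies, and $\mathcal{G}(\func{right}(\nu))$ sits at index $j$; together with $\mathcal{G}(\nu)$ at index $i$, this gives $\func{right}(\mathcal{G}(\nu))$ at $i + 1 + L = j$.

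The main obstacle is the second case, specifically the need to exclude $j = |\Lambda| + 1$. That degenerate possibility arises exactly when $\func{right}(\nu) = \omega$, i.e. $\nu = \pi(\omega)$; in that case $\mathcal{G}(\func{right}(\nu))$ would pick up the $+1$ shift and equal $\omega$ in $\sigma(T)$, while $\func{right}(\mathcal{G}(\nu))$ is the newly inserted node $\phi$, breaking the identity and explaining precisely why the lemma excludes $\pi(\omega)$.
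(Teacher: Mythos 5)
Your proposal is correct and follows essentially the same route as the paper's proof: reduce to comparing symbol indices in $\func{word}(\sigma(T))$, split on whether $|\Lambda| < i$, and use the identity that the right child's symbol sits at index $i + |\func{word}(\func{left}(\nu))| + 1$. The only minor divergences are that you justify $|\func{word}(\func{left}(\nu))| = |\func{word}(\func{left}(\mathcal{G}(\nu)))|$ structurally (the growth site $\omega$ never lies in a left subtree), where the paper routes through Lemma \ref{interval2interval} and Lemma \ref{left2Gleft}, and that you spell out the sub-case analysis behind the paper's terse ``clearly true'' claim that excluding $\pi(\omega)$ keeps the right child's index within $\Lambda$ --- both are sound, and your closing remark correctly identifies why the exclusion is necessary.
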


\begin{proof}
    It will suffice to show that the index of the symbol of $\mathcal{G}(\func{right}(\nu))$ in the \func{word} of $\sigma(T)$ is the same as the index of the symbol of $\func{right}(\mathcal{G}(\nu))$ in the \func{word} of $\sigma(T)$. Let $i$ be the index of $\func{sym}_T(\nu)$ and consider two cases as to whether $|\Lambda| < i$ or not.
    
    In the first case, the index of $\func{sym}_T(\func{right}(\nu)) = i + |\func{word}(\func{left}(\nu))| + 1$ and so the index of $\func{sym}_{\sigma(T)}(\mathcal{G}(\func{right}(\nu))) = i + |\func{word}(\func{left}(\nu))| + 2$. By definition, the index of $\func{sym}_{\sigma(T)}(\mathcal{G}(\nu)) = i + 1$ and so the index of $\func{sym}_{\sigma(T)}(\func{right}(\mathcal{G}(\nu))) = i + |\func{word}(\func{left}(\mathcal{G}(\nu)))| + 2$. Observe that the upper bound of the interval of $\func{left}(\nu)$ must be less than $n$ since it is the left child of its parent. Applying Lemmas \ref{interval2interval} and \ref{left2Gleft} it follows that $\func{interval}(\func{left}(\nu)) = \func{interval}(\mathcal{G}(\func{left}(\nu))) = \func{interval}(\func{left}(\mathcal{G}(\nu)))$. Therefore the size of the subtree rooted at $\func{left}(\nu)$ is the same as the size of the subtree rooted at $\func{left}(\mathcal{G}(\nu))$ and so $|\func{word}(\func{left}(\nu))| = |\func{word}(\func{left}(\mathcal{G}(\nu)))|$
    
    In the second case, we must verify that $i + |\func{word}(\func{left}(\nu))| + 1 \leq |\Lambda|$. But this is clearly true since the only case in which $|\Lambda| < i + |\func{word}(\func{left}(\nu))| + 1$ is when $\nu = \pi(\omega)$, but we have excluded $\pi(\omega)$ from consideration. Observe that the index of $\func{sym}_T(\func{right}(\nu)) = i + |\func{word}(\func{left}(\nu))| + 1$ and so the index of $\func{sym}_{\sigma(T)}(\mathcal{G}(\func{right}(\nu))) = i + |\func{word}(\func{left}(\nu))| + 1$. By definition, the index of $\func{sym}_{\sigma(T)}(\mathcal{G}(\nu)) = i$ and so the index of $\func{sym}_{\sigma(T)}(\func{right}(\mathcal{G}(\nu))) = i + |\func{word}(\func{left}(\mathcal{G}(\nu)))| + 1$. Since $\func{left}(\nu)$ is a left child the upper bound of its interval must be less than $n$. Applying lemmas 2 and 3 it follows that $\func{interval}(\func{left}(\nu)) = \func{interval}(\mathcal{G}(\func{left}(\nu))) = \func{interval}(\func{left}(\mathcal{G}(\nu)))$. This implies that the size of the subtree rooted at $\func{left}(\nu)$ is the same as the size of the subtree rooted at $\func{left}(\mathcal{G}(\nu))$ and so $|\func{word}(\func{left}(\nu))| = |\func{word}(\func{left}(\mathcal{G}(\nu)))|$.
\end{proof}

Together Lemma \ref{left2Gleft} and Lemma \ref{right2Gright} show the preserved parent structure.% what parent structure is preserved between $T$ and $\sigma(T)$.

\begin{corollary}
    \label{parent2Gparent}
    Let $T$ be an extended ordered binary tree of size $n$ and $\omega$ be the internal node of $T$ whose right child is the leaf with label $n$. For any node $\nu \in \func{nodes}(T) - \{\omega\}$, the image of the parent of $\nu$ under $\mathcal{G}$ is the parent of the image of $\nu$ under $\mathcal{G}$, that is, $\mathcal{G}(\pi(\nu)) = \pi(\mathcal{G}(\nu))$.
\end{corollary}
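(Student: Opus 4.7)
The plan is to prove this by a short case analysis on whether $\nu$ is a left or right child of its parent, and then invoke Lemma \ref{left2Gleft} or Lemma \ref{right2Gright} accordingly. Let $\mu = \pi(\nu)$; since $\nu$ has a parent in $T$, $\nu$ is not the root of $T$, so $\mu$ is well-defined and is an internal node of $T$.

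In the first case, suppose $\nu = \func{left}(\mu)$. Then Lemma \ref{left2Gleft} applies to $\mu$ directly and gives $\mathcal{G}(\nu) = \mathcal{G}(\func{left}(\mu)) = \func{left}(\mathcal{G}(\mu))$. Taking parents of both sides yields $\pi(\mathcal{G}(\nu)) = \mathcal{G}(\mu) = \mathcal{G}(\pi(\nu))$, as desired.

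In the second case, suppose $\nu = \func{right}(\mu)$. To invoke Lemma \ref{right2Gright} on $\mu$ we must verify $\mu \neq \pi(\omega)$, and this is exactly where the hypothesis $\nu \neq \omega$ is used. Recall from the proof of Lemma \ref{word2word} that $\omega$ is either the root of $T$ or the right child of its parent. If $\omega$ is the root then $\pi(\omega)$ is undefined and the condition $\mu \neq \pi(\omega)$ holds vacuously. Otherwise $\omega = \func{right}(\pi(\omega))$, so if we had $\mu = \pi(\omega)$ then $\nu = \func{right}(\mu) = \func{right}(\pi(\omega)) = \omega$, contradicting $\nu \in \func{nodes}(T) - \{\omega\}$. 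Thus Lemma \ref{right2Gright} gives $\mathcal{G}(\nu) = \mathcal{G}(\func{right}(\mu)) = \func{right}(\mathcal{G}(\mu))$, and once again taking parents yields $\pi(\mathcal{G}(\nu)) = \mathcal{G}(\mu) = \mathcal{G}(\pi(\nu))$.

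The only subtle point is the bookkeeping around $\omega$: Lemma \ref{right2Gright} fails precisely at $\pi(\omega)$, and the corollary's exclusion of $\omega$ is designed to rule out exactly the pair $(\nu,\mu) = (\omega,\pi(\omega))$ that would force that failure. Once that observation is made, no further computation is required and the corollary follows immediately from the two lemmas.
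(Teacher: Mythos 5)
Your proof is correct and takes exactly the route the paper intends: the paper states this corollary without proof as an immediate consequence of Lemmas \ref{left2Gleft} and \ref{right2Gright}, and your case analysis on whether $\nu$ is a left or right child simply makes that explicit. Your observation that excluding $\nu = \omega$ is precisely what rules out the exceptional node $\pi(\omega)$ in Lemma \ref{right2Gright} (since $\omega$ lies on the rightmost path and is thus the right child of its parent, if not the root) is the one detail worth spelling out, and you handle it correctly.
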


The next lemma will relate \func{intervals}($T$) to \func{intervals}($\sigma(T)$).

\begin{lemma}
    \label{intervals2intervals}
    Let $T$ be an extended ordered binary tree of size $n$ and $\omega$ be the internal node of $T$ whose right child is the leaf with label $n$. Then the intervals of $\sigma(T)$ are related to the intervals of $T$ by
    \begin{align}
        \func{intervals}(\sigma(T)) = &\{(\alpha, \beta + \bm{1}\{\beta = n\}) : (\alpha, \beta) \in \func{intervals}(T)\} \cup \{\func{interval}(\omega)\}
    \end{align}
\end{lemma}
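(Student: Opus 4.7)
The plan is to partition $\func{nodes}(\sigma(T))$ into internal nodes of three kinds and match each class to a portion of the right-hand side of the asserted equality. By Lemma \ref{word2word} and the definition of $\mathcal{G}$, the internal nodes of $\sigma(T)$ are exactly (a) the newly created node $\phi$, whose symbol sits at index $|\Lambda|+1$ of $\func{word}(\sigma(T))$ and so is not in the image of $\mathcal{G}$; (b) the node $\omega$ itself, now realized as $\func{left}(\phi)$ and appearing as $\mathcal{G}(\omega)$ at index $|\Lambda|+2$; and (c) $\mathcal{G}(\nu)$ for each internal $\nu \in \func{nodes}(T)$ with $\nu \neq \omega$. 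These three classes are pairwise disjoint and exhaust the internal nodes of $\sigma(T)$, so $\func{intervals}(\sigma(T))$ is the union of the intervals from each class.

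For class (c), Lemma \ref{interval2interval} delivers exactly $\{(\alpha,\beta+\bm{1}\{\beta=n\}) : (\alpha,\beta)=\func{interval}_T(\nu),\ \nu \text{ internal in } T,\ \nu\neq\omega\}$, which is all of the first set on the right-hand side except for the shift applied to $\func{interval}(\omega)$ itself. For class (b), I would observe that the operation $\sigma$ only modifies the portion of $T$ strictly above $\omega$, so the subtree rooted at $\omega$ in $\sigma(T)$ is identical to the subtree rooted at $\omega$ in $T$; hence $\func{interval}_{\sigma(T)}(\omega) = \func{interval}_T(\omega) = (\gamma, n)$, supplying the $\{\func{interval}(\omega)\}$ summand. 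For class (a), a quick pre-order inspection of $\sigma(T)$ shows that the new leaf (the right child of $\phi$) is visited after every leaf in $\omega$'s subtree, hence receives label $n+1$; thus $\func{interval}(\phi) = (\gamma, n+1)$. Since $(\gamma, n+1)$ is exactly the shift formula evaluated at $\func{interval}(\omega) = (\gamma, n)$, the interval of $\phi$ supplies precisely the one element of $\{(\alpha,\beta+\bm{1}\{\beta=n\}) : (\alpha,\beta)\in\func{intervals}(T)\}$ that Lemma \ref{interval2interval} had to exclude. Combining classes (a), (b), and (c) yields the claimed equality.

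The only delicate point, and the reason the explicit $\{\func{interval}(\omega)\}$ term is needed in addition to the shifted set, is the book-keeping at $\omega$: the shift formula evaluated at $\func{interval}(\omega)$ misdirects to $\phi$'s brand-new interval rather than recovering $\omega$'s own unchanged interval in $\sigma(T)$. Verifying that no spurious collisions occur — i.e., that $(\gamma, n)$ is not already produced by the shift of any other interval from $T$ — is immediate, since the map $(\alpha,\beta)\mapsto(\alpha,\beta+\bm{1}\{\beta=n\})$ is injective on intervals of $T$ (intervals in $T$ have $\beta\leq n$, so the map is either the identity or raises $n$ to $n+1$), and internal nodes are in bijection with their intervals. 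With this distinctness in hand, the three class-by-class computations assemble into the stated union with no overlap or omission.
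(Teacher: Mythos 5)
Your proof is correct and takes essentially the same approach as the paper's: both partition the internal nodes of $\sigma(T)$ into the $\mathcal{G}$-images of nodes other than $\omega$ (handled by Lemma \ref{interval2interval}), the node $\mathcal{G}(\omega)$, and the new node $\phi = \pi(\mathcal{G}(\omega))$, and show that these contribute the shifted intervals, $\func{interval}(\omega)$, and $(\gamma, n+1)$ respectively. The only differences are cosmetic --- the paper computes the two special intervals via the word representation (Lemma \ref{word2word}) and Proposition \ref{betaFromAlpha} rather than your direct pre-order argument, and your closing injectivity check is unnecessary for a set equality.
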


\begin{proof}
    Let $(\gamma, n) = \func{interval}(\omega)$. By Lemma \ref{interval2interval} we have the intervals for the internal nodes of $\sigma(T)$ that are the image under $\mathcal{G}$ of some node in $T$, except for $\omega$. This gives us $n-1$ intervals of $\sigma(T)$ and we have only to consider the interval of $\mathcal{G}(\omega)$ and the interval of $\pi(\mathcal{G}(\omega))$. By Lemma \ref{word2word} the number of `0's which precede $\func{sym}_{\sigma(T)}(\mathcal{G}(\omega))$ is the same as the number of `0's which precede $\func{sym}_{\sigma(T)}(\pi(\mathcal{G}(\omega)))$ which is the same as the number of `0's which precede $\func{sym}_T(\omega)$ and so the intervals of $\mathcal{G}(\omega)$ and $\pi(\mathcal{G}(\omega))$ have the same lower bound, namely $\gamma$, which is the lower bound of $\omega$. By construction, the subtree rooted at $\mathcal{G}(\omega)$ has the same size as the subtree rooted at $\omega$. Applying Proposition \ref{betaFromAlpha} it follows that $\func{interval}(\mathcal{G}(\omega)) = \func{interval}(\omega)$. Also by construction, the subtree rooted at $\pi(\mathcal{G}(\omega))$ is greater in size by 1 than the subtree rooted at $\omega$. Applying Proposition \ref{betaFromAlpha} again yields $\func{interval}(\pi(\mathcal{G}(\omega))) = (\gamma, n + 1) = (\alpha, \beta + \bm{1}\{\beta = n\})$.
\end{proof}

With the relationship between $\func{intervals}(\sigma(T))$ and $\func{intervals}(T)$ proven, we can state and prove the relationship between $\func{1-intervals}(\sigma(T))$ and $\func{1-intervals}(T)$. Our proof strategy will be to determine for each internal node of $T$, $\nu$, the local structure that determines the $\func{1-interval}(\nu)$ in $T$. Then we will determine how that local structure maps to the local structure of $\mathcal{G}(\nu)$ in $\sigma(T)$ and use this to compute $\func{1-interval}(\mathcal{G}(\nu))$.

\begin{lemma}
    \label{1intervalsto1intervals}
    Let $T$ be an extended ordered binary tree of size $n$, $\omega$ the internal node of $T$ whose right child is the leaf with label $n$, and $\phi$ the internal node of $\sigma(T)$ that is the parent of $\mathcal{G}(\omega)$. Then the \func{1-intervals} of $\sigma(T)$ are related to the \func{1-intervals} of $T$ by
    \begin{align*}
        \func{1-intervals}(\sigma(T)) = &\{(\alpha, \beta + \bm{1}\{\beta = n\}) : (\alpha, \beta) \in \Theta\}\\
                            \cup &\{\func{1-interval}(\func{left}(\omega)), (n, n+1), \func{1-interval}(\phi)\}\\
    \end{align*}
    where $\Theta = \func{1-intervals}(T) - \{\func{1-interval}(\omega), \func{1-interval}(\func{left}(\omega))\}$.
\end{lemma}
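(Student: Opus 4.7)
The plan is to enumerate the internal non-root nodes of $\sigma(T)$---assuming $\omega$ is not the root of $T$ so that $\func{1-interval}(\phi)$ is defined---which are exactly $\phi$ together with the images $\mathcal{G}(\nu)$ of the internal non-root nodes $\nu \in T$. For each such node I would compute its 1-interval in $\sigma(T)$ by unfolding the definition, using Lemma \ref{interval2interval} to translate intervals and Lemmas \ref{left2Gleft}, \ref{right2Gright}, and Corollary \ref{parent2Gparent} to identify the relevant parent and children in $\sigma(T)$ as images under $\mathcal{G}$ of the corresponding nodes in $T$. The four groups of contributing nodes correspond to the four terms on the right-hand side of the claimed identity.

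For the generic case---an internal non-root node $\nu \in T$ distinct from $\omega$ and $\func{left}(\omega)$---I would split on whether $\nu$ is a left or right child. If $\nu$ is a left child then $\nu \neq \omega$ and $\nu \neq \pi(\omega)$ (the latter since $\pi(\omega)$ is never a left child), so Corollary \ref{parent2Gparent} and Lemma \ref{right2Gright} both apply; unfolding $\func{1-interval}$ and invoking Lemma \ref{interval2interval} on $\pi(\nu)$ yields $\func{1-interval}(\mathcal{G}(\nu)) = (\alpha, \beta + \bm{1}\{\beta = n\})$. If $\nu$ is a right child then Corollary \ref{parent2Gparent} applies (since $\nu \neq \omega$), Lemma \ref{left2Gleft} is unconditional, and the upper bound of $\func{left}(\nu)$ is strictly less than $n$ because the leaf labelled $n$ cannot lie in any left subtree, so the indicator vanishes and the formula still holds. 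Together these cover the translated image of $\Theta$.

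The three remaining nodes must be handled by direct computation and together form the main bookkeeping obstacle. For $\nu = \omega$, the image $\mathcal{G}(\omega)$ is now the left child of $\phi$, its right child is the leaf labelled $n$ in $\sigma(T)$, and $\phi$'s subtree reaches up to the new leaf labelled $n+1$, giving $\func{1-interval}(\mathcal{G}(\omega)) = (n, n+1)$. The node $\phi$ itself is a right child of $\mathcal{G}(\pi(\omega))$, so its 1-interval is $\func{1-interval}(\phi)$ by definition. The subtle case is $\nu = \func{left}(\omega)$, whose parent in $\sigma(T)$ is $\mathcal{G}(\omega)$; but the subtree rooted at $\mathcal{G}(\omega)$ is structurally identical to the subtree rooted at $\omega$ in $T$---the new leaf sits under $\phi$, not under $\mathcal{G}(\omega)$---so $\mathcal{G}(\omega)$ still has upper bound $n$, making $\func{1-interval}(\mathcal{G}(\func{left}(\omega))) = \func{1-interval}(\func{left}(\omega))$. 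Isolating this exception is the principal obstacle, since it is the one place where the generic $+\bm{1}\{\beta = n\}$ translation fails: $\mathcal{G}(\omega)$ is the unique ancestor-of-leaf-$n$ in $T$ whose image under $\mathcal{G}$ does not inherit the new leaf labelled $n+1$ as a descendant.
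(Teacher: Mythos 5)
Your proposal is correct and takes essentially the same approach as the paper: a node-by-node case analysis computing $\func{1-interval}(\mathcal{G}(\nu))$ via Lemma \ref{interval2interval}, Lemmas \ref{left2Gleft} and \ref{right2Gright}, and Corollary \ref{parent2Gparent}, with $\omega$, $\func{left}(\omega)$, and the new node $\phi$ treated as the special cases. The only differences are organizational: you merge the paper's four generic cases (keyed to whether the relevant upper bounds equal $n$) into two cases keyed to left/right-child status, letting the indicator in Lemma \ref{interval2interval} absorb the distinction, and you explicitly flag the degenerate possibility $\omega = \func{root}(T)$, which the paper passes over silently.
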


\begin{proof}
    Let $(\gamma, n) = \func{interval}(\omega)$ and apply lemma $\ref{word2word}$ to obtain $\func{word}(T) = \Lambda\Omega$ and $\func{word}(\sigma(T)) = \Lambda 1\Omega 0$. Now partition $\func{nodes}(T)$ into 6 sets:
    \begin{enumerate}
        \item $\{\nu : (\delta, n) = \func{interval}(\nu), \nu \notin \{\omega, \func{root}(T)\}\}$ every node, excluding the root and $\omega$, whose interval has an upperbound of $n$.
        \item $\{\nu : (\delta, \beta) = \func{interval}(\pi(\nu)), \beta < n, \nu = \func{left}(\pi(\nu))\}$ every node whose parent's interval has an upperbound less than $n$ and that is a left child of its parent.
        \item $\{\nu : (\alpha, \delta) = \func{interval}(\pi(\nu)), \delta < n, \nu = \func{right}(\pi(\nu))\}$ every node whose parent's interval has an upperbound less than $n$ and that is a right child of its parent.
        \item $\{\nu : (\delta, n) = \func{interval}(\func{\pi}(\nu)), (\kappa, n) \neq \func{interval}(\nu), \nu \neq \func{left}(\omega)\}$ every node, $\nu$, excluding the left child of $\omega$, such that the upper bound of the interval of $\nu$ is not $n$ and the upper bound of the interval of $\nu$'s parent is $n$.
        \item $\{\omega\}$ the singleton set containing $\omega$.
        \item $\{\func{left}(\omega)\}$ the singleton set containing the left child of $\omega$.
    \end{enumerate}
    
    In the first case, because we have excluded the root, the parent of $\nu$ is a node in $T$ and its right child must be $\nu$. Since the $\func{1-interval}(\nu) = (\alpha, \beta)$ is obtained by taking the lower bound of $\nu$'s parent and the upper bound of the left child of $\nu$ it follows that $\func{interval}(\func{\pi}(\nu)) = (\alpha, n)$ and $\func{interval}(\func{left}(\nu)) = (\delta, \beta)$. Since we have excluded $\omega$, $\alpha$, $\beta$ and $\delta$ must be less than $\gamma$. Applying Lemma \ref{interval2interval}, Lemma \ref{left2Gleft} and Corollary \ref{parent2Gparent} we have $\func{interval}(\mathcal{G}(\nu)) = (\delta, n+1), \func{interval}(\mathcal{G}(\func{\pi}(\nu))) = \func{interval}(\pi(\mathcal{G}(\nu))) = (\alpha, n+1)$ and $\func{interval}(\mathcal{G}(\func{left}(\nu))) = \func{interval}(\func{left}(\mathcal{G}(\nu))) = (\delta, \beta)$. It follows that $\func{1-interval}(\mathcal{G}(\nu)) = (\alpha, \beta) = (\alpha, \beta + \bm{1}\{\beta = n\}) = \func{1-interval}(\nu)$.
    
    In the second case $\func{interval}(\nu) = (\delta, \epsilon)$ and $\func{interval}(\func{right}(\nu)) = (\alpha, \epsilon)$ for some $\epsilon$ and $\alpha$. Since $\beta < n$, applying Lemma \ref{interval2interval}, Lemma \ref{right2Gright} and Corollary \ref{parent2Gparent} yields $\func{interval}(\mathcal{G}(\nu)) = \func{interval}(\nu)$, $\func{interval}(\mathcal{G}(\pi(\nu))) = \func{interval}(\pi(\mathcal{G}(\nu)))$ and $\func{interval}(\func{right}(\mathcal{G}(\nu))) = \func{interval}(\func{right}(\nu))$. Computing \func{1-interval} of $\mathcal{G}(\nu)$ yields $\func{1-interval}(\mathcal{G}(\nu)) = (\alpha, \beta) = (\alpha, \beta + \bm{1}\{\beta = n\}) = \func{1-interval}(\nu)$.
    
    In the third case $\func{interval}(\nu) = (\epsilon, \delta)$ and $\func{interval}(\func{left}(\nu)) = (\epsilon, \beta)$ for some $\epsilon$ and $\beta$. Since $\delta < n$, applying Lemma \ref{interval2interval}, Lemma \ref{left2Gleft} and Corollary \ref{parent2Gparent} yields $\func{interval}(\mathcal{G}(\nu)) = \func{interval}(\nu)$, $\func{interval}(\mathcal{G}(\pi(\nu))) = \func{interval}(\pi(\nu))$ and $\func{interval}(\func{left}(\mathcal{G}(\nu))) = \func{interval}(\func{left}(\nu))$. Therefore, $\func{1-interval}(\mathcal{G}(\nu)) = (\alpha, \beta) = (\alpha, \beta + \bm{1}\{\beta = n\}) = \func{1-interval}(\nu)$.
    
    In the fourth case, because the upperbound of $\nu$ is not $n$, it follows $\nu$ is the left child of its parent, and so $(\delta, \epsilon) = \func{interval}(\nu)$ and $(\alpha, \epsilon) = \func{interval}(\func{right}(\nu))$ for some $\epsilon$ and $\alpha$ such that $\epsilon < n$.  Consequently $\func{1-interval}(\nu) = (\alpha, n)$ for some $\alpha$ and $\func{interval}(\func{right}(\nu)) = (\alpha, \epsilon)$. Applying Lemma \ref{interval2interval}, Lemma \ref{right2Gright} and Corollary \ref{parent2Gparent} we have that $\func{interval}(\mathcal{G}(\nu)) = \func{interval}(\nu)$, $\func{interval}(\pi(\mathcal{G}(\nu))) = (\delta, n + 1)$ and $\func{interval}(\func{right}(\mathcal{G}(\nu))) = \func{interval}(\func{right}(\nu))$ therefore, $\func{1-interval}(\mathcal{G}(\nu)) = (\alpha, n+1) = (\alpha, \beta + \bm{1}\{\beta = n\})$.
    
    In the fifth case $\func{interval}(\mathcal{G}(\omega)) = (\gamma, n)$, $\func{interval}(\pi(\mathcal{G}(\omega))) = (\gamma, n + 1)$ and the right child of $\mathcal{G}(\omega)$ is the leaf with label $n$. Therefore, $\func{1-interval}(\mathcal{G}(\omega)) = (n, n + 1)$.
    
    In the sixth case, by definition we have $\pi(\func{left}(\omega)) = \omega$ and so $\func{interval}(\pi(\func{left}(\omega))) = \func{interval}(\omega) = (\gamma, n)$. Because the right child of $\omega$ is the leaf with label $n$ it follows that $\func{interval}(\func{left}(\omega)) = (\gamma, n-1)$. Therefore $\func{interval}(\func{right}(\func{left}(\omega))) = (\alpha, n-1)$ for some $\alpha$ and $\func{1-interval}(\func{left}(\omega)) = (\alpha, n)$. By Lemma \ref{interval2interval} we have $\func{interval}(\mathcal{G}(\func{left}(\omega))) = \func{interval}(\func{left}(\omega))$. By Lemma \ref{intervals2intervals} and Corollary \ref{parent2Gparent} we have $\func{interval}(\mathcal{G}(\omega)) = \func{interval}(\omega)$. By Lemma \ref{interval2interval} and Lemma \ref{left2Gleft},  $\func{interval}(\func{right}(\mathcal{G}(\func{left}(\omega)))) = \func{interval}(\func{right}(\func{left}(\omega)))$ so $\func{1-interval}(\mathcal{G}(\func{left}(\omega))) = \func{1-interval}(\func{left}(\omega))$.
    
    The preceding case analysis computes the $\func{1-interval}$ for every internal node of $\sigma(T)$ that is the image under $\mathcal{G}$ of some node $\nu$ in $T$. To complete the $\func{1-intervals}(\sigma(T))$ and the proof, we add $\func{1-interval}(\phi)$.
\end{proof}

With these substitution rules for obtaining the $\func{intervals}$ and $\func{1-intervals}$ of a tree $\sigma(T)$ from the $\func{intervals}$ and $\func{1-intervals}$ of $T$, we can now proceed to show that if $(S, T)$ is a difficult pair, then so too is $(\sigma(S), \sigma(T))$. We will do so by first showing that the pair $(\sigma(S), \sigma(T))$ do not have a common interval between them and secondly that they have no one-off intervals between them either.

\begin{lemma}
    \label{nocommonintervals}
    Let $(S, T)$ be a difficult pair of extended ordered binary trees of size $n$, let $\omega_S$ be the internal node of $S$ whose right child is the leaf with label $n$, and let $\omega_T$ be the internal node of $T$ whose right child is the leaf with label $n$. Then the pair of trees $(\sigma(S), \sigma(T))$ do not have an interval in common.
\end{lemma}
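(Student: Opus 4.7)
The plan is to apply Lemma \ref{intervals2intervals} to read off $\func{intervals}(\sigma(S))$ and $\func{intervals}(\sigma(T))$ explicitly, then argue by contradiction via a case split on the upper bound $\beta$ of a hypothetical common non-root interval $(\alpha,\beta)$.

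Setting $(\gamma_S, n) = \func{interval}(\omega_S)$ and $(\gamma_T, n) = \func{interval}(\omega_T)$, Lemma \ref{intervals2intervals} sorts the intervals of $\sigma(S)$ into three types: (i) intervals $(\alpha', \beta')$ with $\beta' < n$ inherited unchanged from $\func{intervals}(S)$; (ii) intervals $(\alpha', n+1)$ produced from $(\alpha', n) \in \func{intervals}(S)$; and (iii) the additional interval $(\gamma_S, n)$. The same classification applies to $\sigma(T)$. I would then suppose for contradiction that a common interval $(\alpha, \beta) \neq (0, n+1)$ exists, noting that the roots of $\sigma(S)$ and $\sigma(T)$ trivially share the interval $(0,n+1)$, and split on $\beta$. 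When $\beta < n$, both copies must be of type (i), so $(\alpha, \beta) \in \func{intervals}(S) \cap \func{intervals}(T)$ is a common non-root interval, contradicting the difficulty of $(S,T)$. When $\beta = n+1$, both copies must be of type (ii), so $(\alpha, n) \in \func{intervals}(S) \cap \func{intervals}(T)$; the exclusion $(\alpha,\beta)\neq (0,n+1)$ forces $\alpha > 0$, so this common interval is again nontrivial and contradicts difficulty.

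The main obstacle is the case $\beta = n$. Here types (i) and (ii) cannot realize upper bound exactly $n$, so both copies must be of type (iii), forcing $\func{interval}(\omega_S) = (\gamma_S, n) = (\gamma_T, n) = \func{interval}(\omega_T)$, which is already a common interval of $(S,T)$. If $\gamma_S > 0$ it is nontrivial and contradicts difficulty directly. Otherwise $\gamma_S = 0$ forces $\omega_S = \func{root}(S)$ and $\omega_T = \func{root}(T)$, so the leaf labeled $n$ is the right child of the root in both trees; then the left children of the two roots both carry the interval $(0, n-1)$, yielding the needed common non-root interval of $S$ and $T$ and contradicting difficulty a final time.
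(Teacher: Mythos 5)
Your proof is correct and follows essentially the same route as the paper's: invoke Lemma \ref{intervals2intervals}, classify a hypothetical common interval of $(\sigma(S),\sigma(T))$ by its upper bound ($<n$, $n+1$, or exactly $n$), and pull each case back to a common interval of $(S,T)$ contradicting difficulty. The only difference is that you are more careful than the paper at the degenerate points---you explicitly set aside the trivial root interval $(0,n+1)$, you note that the $\beta = n+1$ case yields a \emph{non-root} common interval since $\alpha > 0$, and in the $\beta = n$ case you handle the possibility $\gamma_S = \gamma_T = 0$ (where $\omega_S$ and $\omega_T$ are the roots, so the pulled-back common interval $(0,n)$ is itself the trivial one shared by all pairs) by descending to the left children of the roots, which share $(0,n-1)$; the paper's proof passes over these points silently, so your version actually closes small gaps in its argument.
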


\begin{proof}
    Assume, to the contrary, that the pair $(\sigma(S), \sigma(T))$ have an interval in common. Then some interval, call it $t$, in $\func{intervals}(\sigma(T))$ is also in $\func{intervals}(\sigma(S))$. Consider the possible forms of $t$ given by Lemma \ref{intervals2intervals}. If $t = \func{interval}(\omega_T)$, then $t = (\alpha, n)$ and by Lemma \ref{intervals2intervals}, $\func{interval}(\omega_S) = t$ which implies $(S, T)$ is not a difficult pair. Otherwise, $t = (\alpha, \delta)$, is some other interval in $\func{intervals}(\sigma(T))$. If $\delta = n + 1$, then the interval $(\alpha, n)$ is in both $\func{intervals}(S)$ and $\func{intervals}(T)$ which implies $(S, T)$ is not a difficult pair. Otherwise $\delta < n$ and so $(\alpha, \delta)$ is in $\func{intervals}(S)$ and $\func{intervals}(T)$ and $(S, T)$ is not a difficult pair.
\end{proof}

\begin{lemma}
    \label{nooneoffs}
    Let $(S, T)$ be a difficult pair of extended ordered binary trees of size $n$, let $\omega_S$ be the internal node of $S$ whose right child is the leaf with label $n$, and let $\omega_T$ be the internal node of $T$ whose right child is the leaf with label $n$. Then the pair of trees $(\sigma(S), \sigma(T))$ have no one-off intervals between them.
\end{lemma}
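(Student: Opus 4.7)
The plan is to assume for contradiction that $(\sigma(S), \sigma(T))$ has a one-off interval and derive a contradiction with the assumption that $(S, T)$ is difficult. By the symmetric roles of $S$ and $T$, I may assume there exists $u \in \func{1-intervals}(\sigma(S)) \cap \func{intervals}(\sigma(T))$; the reverse direction is handled identically. Write $(\gamma_T, n) = \func{interval}(\omega_T)$. I would apply Lemma \ref{1intervalsto1intervals} to $S$ to enumerate the four possible sources of $u$ in $\func{1-intervals}(\sigma(S))$ and Lemma \ref{intervals2intervals} to $T$ to enumerate the two possible sources of $u$ in $\func{intervals}(\sigma(T))$.

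The observation driving every cross-match is that $\beta \mapsto \beta + \bm{1}\{\beta = n\}$ maps $\{0, \ldots, n\}$ injectively onto $\{0, \ldots, n-1\} \cup \{n+1\}$, skipping $n$. Consequently an interval $(x, y) \in \func{intervals}(\sigma(T))$ with $y < n$ pulls back uniquely to $(x, y) \in \func{intervals}(T)$; with $y = n+1$ pulls back to $(x, n) \in \func{intervals}(T)$; and with $y = n$ it must equal $\func{interval}(\omega_T)$.

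With this in hand, three of the four sources for $u$ are straightforward. The source $u = (n, n+1)$ is ruled out directly: no interval $(n, n)$ exists in $\func{intervals}(T)$ and $\func{interval}(\omega_T)$ has lower bound $\gamma_T < n$. For a generic source $u = (\alpha, \beta + \bm{1}\{\beta = n\})$ with $(\alpha, \beta) \in \func{1-intervals}(S)$, the injectivity observation pulls the match back to $(\alpha, \beta) \in \func{intervals}(T)$, producing a forbidden one-off in $(S, T)$. For $u = \func{1-interval}(\func{left}(\omega_S))$, a rotation at the left child of $\omega_S$ inherits $\omega_S$'s upper bound $n$, so $u$ has upper bound $n$ and the matching forces $u = \func{interval}(\omega_T) \in \func{intervals}(T)$, again a one-off from $S$ meeting an interval of $T$.

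The main obstacle is the case $u = \func{1-interval}(\phi_S)$, where $\phi_S$ is the newly-created node of $\sigma(S)$ with no preimage under $\mathcal{G}$. Here I would first observe that $\omega_S$ cannot be the root of $S$, for otherwise the root intervals of $S$ and $T$ would both equal $(0, n)$, contradicting difficulty at the outset. Since $\omega_S$ lies on the rightmost path of $S$, it is the right child of $\pi(\omega_S)$, and correspondingly $\phi_S$ is the right child of $\mathcal{G}(\pi(\omega_S))$ in $\sigma(S)$. A direct computation using Lemma \ref{interval2interval} applied to $\pi(\omega_S)$ (whose upper bound is $n$ because its right child $\omega_S$ has upper bound $n$) yields $\func{1-interval}(\phi_S) = (\lfloor\func{interval}(\pi(\omega_S))\rfloor, n)$. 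The upper bound $n$ forces $u = \func{interval}(\omega_T)$, so $\lfloor\func{interval}(\pi(\omega_S))\rfloor = \gamma_T$; but since $\omega_S$ is the right child of $\pi(\omega_S)$ and has upper bound $n$, we get $\func{interval}(\pi(\omega_S)) = (\gamma_T, n) = \func{interval}(\omega_T)$, a common interval of $(S, T)$, again contradicting difficulty.
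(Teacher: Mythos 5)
Your proposal follows the paper's own proof almost exactly: the same reduction by symmetry (you take the 1-interval from $\sigma(S)$ where the paper takes it from $\sigma(T)$), the same case split over the four sources given by Lemma \ref{1intervalsto1intervals}, matched against the two sources given by Lemma \ref{intervals2intervals}, and the same contradictions in the generic case, the $(n,n+1)$ case, and the $\func{left}(\omega_S)$ case. Your explicit observation that $\beta \mapsto \beta + \bm{1}\{\beta = n\}$ is injective and skips $n$ is a nice packaging of what the paper argues case by case. However, there is a genuine error in your handling of $u = \func{1-interval}(\phi_S)$: the claim that $\omega_S$ cannot be the root of $S$ ``because the root intervals of $S$ and $T$ would both equal $(0,n)$, contradicting difficulty.'' The root intervals of \emph{any} two trees of the same size are both $(0,n)$; that interval corresponds to a side of the marked polygon, not a diagonal, and cannot count as a disqualifying common interval, or else no difficult pairs would exist at all. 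Moreover the exclusion itself is false, not merely unjustified: difficult pairs in which $\omega_S$ is the root do exist. For instance, among the size-$4$ primitives, the pair corresponding to the hexagon triangulations with diagonal sets $\{02, 24, 04\}$ and $\{13, 35, 15\}$ is difficult, and in the first tree the right child of the root is the leaf labelled $4$.

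Fortunately, the case you were trying to exclude needs no exclusion, only a different observation: if $\omega_S$ is the root of $S$, then $\phi_S$ is the root of $\sigma(S)$, and rotation is defined only at non-root internal nodes, so $\func{1-interval}(\phi_S)$ does not exist and this case is vacuous. (The paper glosses over the same point by silently writing $\pi(\omega_T)$, which presumes $\omega_T$ is not the root.) When $\omega_S$ is not the root, your computation $\func{1-interval}(\phi_S) = (\lfloor\func{interval}(\pi(\omega_S))\rfloor, n)$, the forced identification with $\func{interval}(\omega_T)$, and the resulting common interval $\func{interval}(\pi(\omega_S)) = \func{interval}(\omega_T)$ of $(S,T)$ are exactly the paper's argument. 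With the root case repaired as above, your proof is complete and coincides with the paper's.
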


\begin{proof}
    Assume, to the contrary, that the pair $(\sigma(S), \sigma(T))$ have some one-off interval between them. Without loss of generality, let $t$ be the $\func{1-interval}$ of $\sigma(T)$ that is also an interval of $\sigma(S)$ and consider the possible forms of $t$ given by Lemma \ref{1intervalsto1intervals}. By construction, neither $\sigma(S)$ nor $\sigma(T)$ can have the interval $(n, n+1)$ and so $t \neq (n, n+1)$. If $t = \func{1-interval}(\func{left}(\omega_T))$, then $t = (\lfloor\func{right}(\func{left}(\omega_T))\rfloor, n) \in \func{intervals}(\sigma(S))$ but then Lemma \ref{intervals2intervals} implies $t = \func{interval}(\omega_S)$ and so $(S, T)$ is not a difficult pair. If $t = \func{1-interval}(\phi)$ then $t = (\lfloor\pi(\phi_T)\rfloor,n) = (\lfloor\pi(\omega_T)\rfloor, n) = \func{interval}(\pi(\omega_T))$ but then Lemma \ref{intervals2intervals} again implies $t = \func{interval}(\omega_S)$ and so $(S, T)$ is not a difficult pair. If $t = (\alpha, n+1)$, then $(\alpha, n) \in \func{1-intervals}(T)$ and $(\alpha, n) \in \func{intervals}(S)$ and $(S, T)$ is not a difficult pair. Otherwise $t = (\alpha, \delta)$ such that $\delta < n$, therefore $(\alpha, \delta) \in \func{1-intervals}(T)$ and $(\alpha, \delta) \in \func{intervals}(S)$ and so $(S, T)$ is not a difficult pair.
\end{proof}

We have now established the fact that the pair $(\sigma(S), \sigma(T))$ is a difficult tree pair which underlies the correctness of DPS.

\begin{theorem}
    \label{sSsTisdifficult}
    Let $(S, T)$ be a difficult pair of extended ordered binary trees of size $n$, then the pair $(\sigma(S), \sigma(T))$ of extended ordered binary trees of size $n+1$ is a difficult pair.
\end{theorem}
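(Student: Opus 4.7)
The plan is to observe that this theorem is essentially a direct combination of the two preceding lemmas, since the definition of a difficult tree pair is precisely a pair of trees that share no common interval and share no one-off interval. So the work of actually constructing the argument has already been done in Lemma \ref{nocommonintervals} and Lemma \ref{nooneoffs}; what remains is to assemble them.

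First I would note that both $\sigma(S)$ and $\sigma(T)$ are well-defined, since in any tree of size $n \geq 4$ there is a unique internal node whose right child is the leaf labelled $n$, so the growth operation $\sigma$ applies to each. I would then invoke Lemma \ref{nocommonintervals} on $(S,T)$ to conclude that $\func{intervals}(\sigma(S)) \cap \func{intervals}(\sigma(T)) = \emptyset$, so that the pair $(\sigma(S), \sigma(T))$ has no common interval. Next, I would invoke Lemma \ref{nooneoffs} (applied in both directions, once with the role of $S$ and $T$ swapped, although as stated the lemma already rules out both orientations of one-off overlap) to conclude that no element of $\func{1-intervals}(\sigma(S))$ lies in $\func{intervals}(\sigma(T))$ and no element of $\func{1-intervals}(\sigma(T))$ lies in $\func{intervals}(\sigma(S))$.

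Combining these two facts with the definition of a difficult pair yields immediately that $(\sigma(S), \sigma(T))$ is a difficult pair of size $n+1$, completing the proof.

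The only subtlety I anticipate is bookkeeping around the asymmetry of the statement of Lemma \ref{nooneoffs}: it is phrased with $t$ a one-off interval of $\sigma(T)$ that is an interval of $\sigma(S)$, and the ``without loss of generality'' clause must genuinely cover the symmetric case. Since $\sigma$ depends only on the structural position of the rightmost leaf and treats the two trees identically, swapping the roles of $S$ and $T$ in Lemma \ref{nooneoffs} gives the other direction, so no real obstacle arises. The theorem statement is therefore a one-line consequence of the two lemmas it sits above.
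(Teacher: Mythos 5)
Your proposal is correct and matches the paper's own proof, which simply states that the theorem is immediate from Lemma \ref{nocommonintervals} and Lemma \ref{nooneoffs}. Your additional remarks on the well-definedness of $\sigma$ and the symmetry in Lemma \ref{nooneoffs} are sound but not needed beyond what the paper already relies on.
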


\begin{proof}
    Immediate from Lemma \ref{nocommonintervals} and Lemma \ref{nooneoffs}
\end{proof}

\begin{theorem}
    \label{dpsiscorrect}
    Let $n$ be a natural number greater or equal to 4, then Difficult Pair Sampling algorithm is guaranteed to return a difficult pair of trees of size $n$.
\end{theorem}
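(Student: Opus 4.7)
The plan is to prove this by a straightforward induction on the loop index $i$ of the DPS procedure, leveraging Theorem \ref{sSsTisdifficult} as the key inductive tool. The statement to establish by induction is: after the $i$-th iteration completes (or after the initialization, for $i = 4$), the variables $S$ and $T$ hold a difficult pair of trees of size $i$.

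For the base case $i = 4$, correctness follows from the fact that $\func{randomPrimitiveDifficultPair}$ returns one of the four known primitive difficult pairs of size $4$. For the inductive step, I would assume that at the start of the $i$-th iteration (for $5 \leq i \leq n$), $(S, T)$ is a difficult pair of size $i - 1$. The main obligation is to show that the set $\id{choices}$ constructed in the body of the loop is non-empty, so that $\attrib{\id{choices}}{randomElement}()$ is well-defined and returns some difficult pair of size $i$.

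To show $\id{choices} \neq \emptyset$, I would exhibit a specific witness: the pair $(\sigma(S), \sigma(T))$. By definition of $\sigma$, the tree $\sigma(S)$ is obtained from $S$ by growing left at the internal node whose right child is the leaf labelled $|S|$, so $\sigma(S) \in \func{growthNeighbors}(S)$; symmetrically $\sigma(T) \in \func{growthNeighbors}(T)$. By Theorem \ref{sSsTisdifficult}, $(\sigma(S), \sigma(T))$ is itself a difficult pair of size $i$, so it passes the $\func{isDifficultPair}$ filter and is added to $\id{choices}$. Hence $\id{choices}$ contains at least one difficult pair of size $i$, and every element of $\id{choices}$ is such a pair by construction, so the value assigned to $(S, T)$ at the end of the iteration is a difficult pair of size $i$. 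This completes the induction, and after the final iteration $S$ and $T$ form a difficult pair of size $n$, which is what the algorithm returns.

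There is no real obstacle here — all the difficulty was absorbed into the preceding lemmas and Theorem \ref{sSsTisdifficult}. The only subtlety worth making explicit is the edge case $n = 4$, where the loop body is not executed and the returned pair is just the primitive difficult pair chosen during initialization, which is difficult by definition.
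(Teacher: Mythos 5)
Your proposal is correct and follows essentially the same route as the paper: induction with the four primitive pairs as the base case, and Theorem \ref{sSsTisdifficult} supplying a difficult pair among the growth-neighbor pairs so that $\id{choices}$ is non-empty at every iteration. If anything, your version is slightly more explicit than the paper's, since you name $(\sigma(S), \sigma(T))$ as the concrete witness and note that every element of $\id{choices}$ passes the $\func{isDifficultPair}$ filter, but the underlying argument is identical.
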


\begin{proof}
    We proceed by induction on $n$ the size of the trees in the difficult pair desired. In the base case, $n = 4$, in which case, DPS samples one of the 4 primitive difficult pairs of trees which have been found by enumeration.
    
    Now suppose Difficult Pair Sampling is guaranteed to return a difficult pair of trees for all $m < n$ such that $m,n\in\mathbb{N}$, $4 < n$, and let $(S, T)$ be a difficult pair of trees of size $n-1$ sampled by DPS. By Theorem \ref{sSsTisdifficult} there is at least one pair of difficult trees in the set of all pairs of growth neighbors of $S$ and $T$ which DPS will find by enumeration. Consequently, DPS is guaranteed to return a difficult pair of trees of size $n$.
\end{proof}

\section{Time Complexity of DPS}

We will now analyze the time complexity of DPS and show that it is $O(n^4)$. Line 1 can be implemented to run in constant time by using a table of the primitive difficult pairs. By returning a pair of pointers %, or some other similar scheme,
line 9 can also be implemented to run in constant time. Therefore the time complexity of DPS is determined by the {\tt for} loop of lines 2 through 8. We name the {\tt for} loops as follows: let $f$ be the for loop of lines 2 through 8, $g$ be the {\tt for} loop of lines 4 through 7, and $h$ be the {\tt for} loop of lines 5 through 7.

We consider one iteration of $f$ with $(S, T)$ being the current difficult pair and suppose we use a table, $t$, to hold the pairs of difficult growth neighbors of $(S, T)$. Given $n$, we  bound the maximum size of $t$ by the space required for the maximum number of pairs of difficult growth neighbors of size $n$ and so preallocate the space. The space complexity of $t$ is $O(n^3)$. If, on each iteration of $f$, the candidate difficult growth neighbor pairs are stored from the beginning of the table contiguously, then line 3 can be implemented to run in constant time by starting again at the beginning of the table and keeping track of how many rows have been filled. Further, with such a scheme, line 8 can be implemented to run in $O(n)$ time by randomly selecting a row (in constant time) of one of the candidate difficult pairs and then copying the selected candidates (in linear time) to the space allocated for the current pair. The time complexity of one iteration of $f$ is therefore $O(n + O(g))$.

The time complexity of one iteration of $g$ is the sum of the time required to compute one growth neighbor of $S$ and the time complexity of $h$. Using the word representation of an extended ordered binary tree, it is possible to compute a growth neighbor, including its intervals and 1-intervals, in linear time. Thu one iteration of $g$  takes $O(n + O(h))$ steps.

The time complexity of one iteration of $h$ is the sum of the time complexity of computing one growth neighbor of $T$, the time complexity of checking if the resulting pair of growth neighbors, $(U, V)$, is difficult, and the time complexity of adding $(U, V)$ to $choices$. Now suppose we allocate two, two-dimensional tables, $a$ and $b$, where we use $a$ to store $\func{intervals}(S)\cup\func{1-intervals}(T)$ and $b$ to store $\func{intervals}(T)\cup\func{1-intervals}(S)$. These tables will require at most $O(n^2)$ space. If we populate these tables as we construct the growth neighbors, then we can determine whether the pair $(U, V)$ is difficult as we construct $V$ and set a flag appropriately. Then line 6 can be implemented to run in constant time by checking the flag. Assuming the candidate pairs are being stored in table $t$, then adding the pair $(U, V)$ to $choices$ can be a constant time increment operation. So the time complexity of one iteration of $h$ is $O(n)$.

The number of iterations of $h$ is determined by the number of growth neighbors of $T$. As previously mentioned, for a tree of size $n$, a straightforward upper bound on the number of growth neighbors is $3n + 1$. Hence, $h$ will execute $O(n)$ times and $O(h) = O(n^2)$. The same reasoning shows that $g$ will also execute $O(n)$ times and so $O(g) = O(n^3)$. Finally, it is clear that $f$ also executes $O(n)$ times and so $O(f) = O(n^4) = O(\func{DPS})$.

\section{Sampling coverage of DPS}

One of the excellent features of Remy's algorithm for generating trees is that it selects a given tree uniformly at random from all trees.   This DPS algorithm does not have such uniformity, with some difficult tree pairs being sampled more often than others.  Since the number of difficult tree pairs of a particular size is not known exactly, the degree of non-uniformity is difficult to calcultate exactly. From work of Cleary, Elder, Rechnitzer and Taback \cite{randomf}, the fraction of difficult pairs (and in fact its superset, the set of reduced tree pairs) goes to zero exponentially quickly as the size of the tree pairs increase.  Calculations by Cleary and Maio \cite{counthard} show that the number of difficult pairs appear to grow exponentially with an exponential growth rate of about 2.17975, out of the set of equivalence classes of all pairs with an exponential growth rate of about 2.4420.  This is consistent with the observation that for large $n$, the change of selecting a difficult tree pair at random is vanishingly small.

As far as the degree of coverage, for small $n$ where feasible, we found that the DPS algorithm does sample from all hard cases.  The number of distinct difficult pairs for larger $n$  appears to grow exponentially.  This is a starkly broader class of difficult pairs than those specific known earlier examples of Dehornoy \cite{dehornoy}, Pournin \cite{pournin}, and Cleary and Maio \cite{badconflicts} which are examples of difficult tree pairs of increasingly large sizes but though there are multiple possible examples of increasing size, these numbers do not grown nearly as fast as the set of all possible difficult pairs.

As far as the degree of uniformity, computations for small $n$ with exhaustive coverage show essentially complete coverage of difficult instances with factors in the range of 2 between the first and third quartiles of number of instances and a factor of about 7 between the most commonly and least commonly generated.

%\subparagraph*{Acknowledgements.}

%Partial funding provided by NSF \#1417820. This work was partially supported by a grant from the Simons Foundation (\#234548 to Sean Cleary)
\bibliographystyle{plain}
\bibliography{rotbib}

\begin{thebibliography}{1}

\bibitem{randomf}
Sean Cleary, Murray Elder, Andrew Rechnitzer, and Jennifer Taback.
\newblock Random subgroups of {T}hompson's group {$F$}.
\newblock {\em Groups Geom. Dyn.}, 4(1):91--126, 2010.

\bibitem{counthard}
Sean Cleary and Roland Maio.
\newblock Estimating the number of difficult tree pairs with respect to the
  rotation distance problem.
\newblock in preparation.

\bibitem{badconflicts}
Sean Cleary and Roland Maio.
\newblock Edge conflicts do not determine geodesics in the associahedron.
\newblock {\em SIAM J. Discrete Math.}, 32(2):1003--1015, 2018.

\bibitem{rotfpt}
Sean Cleary and Katherine St.~John.
\newblock Rotation distance is fixed-parameter tractable.
\newblock {\em Inform. Process. Lett.}, 109(16):918--922, 2009.

\bibitem{culik1982note}
K.~Culik and D.~Wood.
\newblock A note on some tree similarity measures.
\newblock {\em Information Processing Letters}, 15(1):39--42, 1982.

\bibitem{dehornoy}
Patrick Dehornoy.
\newblock On the rotation distance between binary trees.
\newblock {\em Adv. Math.}, 223(4):1316--1355, 2010.

\bibitem{pournin}
Lionel Pournin.
\newblock The diameter of associahedra.
\newblock {\em Adv. Math.}, 259:13--42, 2014.

\bibitem{remy}
Jean-Luc R{\'e}my.
\newblock Un proc\'ed\'e it\'eratif de d\'enombrement d'arbres binaires et son
  application \`a leur g\'en\'eration al\'eatoire.
\newblock {\em RAIRO Inform. Th\'eor.}, 19(2):179--195, 1985.

\bibitem{slt88}
Daniel~D. Sleator, Robert~E. Tarjan, and William~P. Thurston.
\newblock Rotation distance, triangulations, and hyperbolic geometry.
\newblock {\em J. Amer. Math. Soc.}, 1(3):647--681, 1988.

\end{thebibliography}
\end{document}